\newtheorem{theorem}{Theorem}[section]
\newtheorem{proposition}[theorem]{Proposition}
\begin{document}
\title{ Soliton  in a Well. Dynamics and Tunneling.}
\author{V. Fleurov \\
Raymond and Beverly Sackler Faculty of Exact Sciences, \\ School of
Physics and Astronomy, \\ Tel-Aviv University, Tel-Aviv 69978
Israel. \and A. Soffer\\
Department of Mathematics,\\ Rutgers University, New Brunswick, NJ
08903,USA} \maketitle
\begin{abstract}
We derive the leading order radiation through tunneling of an
oscillating soliton in a well. We use the hydrodynamic formulation
with a rigorous control of the errors for finite times.
\end{abstract}

\maketitle

\section{Introduction}

The dynamics of nonlinear dispersive systems pose a great challenge
to mathematics and physics. A large number of possible phenomena,
which include the formation of solitons or other coherent
structures, the blow up solutions, multichannel scattering, and
nonlinear tunneling, makes the development of a "general
theory" practically impossible. We then look at specific systems.

Here we will consider the motion of a soliton, described by the
Nonlinear Schr\"odinger (NLS) equation, moving inside a
one-dimensional well, which is however truncated so as to allow
tunneling.

This problem was considered in great detail by \cite{GS05,GS06,GS07}. They showed that in the semiclassical parameter regime a soliton, which is close to the bottom of the potential well, will stop moving as time goes to infinity. The process of stopping is due to radiation out of the well via tunneling. When viewed formally, the coupling of the bound state to the continuum shows up at $1/\omega$ order of perturbation expansion. Here $\omega = 2\pi/T$ where $T$ is the oscillation period. It was shown in \cite{GS05,GS06} (see also \cite{GW}) that the corresponding time dependent resonance theory \cite{SWGA,SW99} requires $1/\omega$ orders of iteration.

Our approach to this problem is different. We only solve the problem up to a finite time, though very long compared to the oscillation period $T$. In fact our results apply to times of the order $t_0 = 1/(\omega\varepsilon)$ where $\varepsilon $ is the oscillation amplitude. Previous work on the finite time semiclassical soliton was developed in detail and for a general potential in \cite{FGJS}. In that paper only the classical trajectory for the soliton is derived, but not the dissipation due to quantum tunneling. We prove using the hydrodynamic formalism of the quantum tunneling \cite{FS05,DFSS07,DFSF09,DFFS10} that the radiation outside the well is outgoing and find its leading order behavior. The fact that the radiation is purely outgoing is then used to prove that it implies the monotonic decay in time of the internal energy of the soliton and of its oscillation amplitude. These estimates hold up to times of order $t_0$. So the extension to an arbitrary time is possible, since it implies that in the time interval $0 \leq t \leq t_0$ the amplitude $\varepsilon(t)$ goes monotonically down.

The control of the errors in the hydrodynamic formulation is done here for the first time. The key difficulty is to estimate the error due to the quantum potential. The latter is defined as
$$
Q(|\psi|) = - \frac{\partial_x^2 |\psi|}{|\psi|}
$$
for the wave function $\psi$. Let $\psi$ be written as
$$
\psi = S(x - a(t))(1 + \chi(x,t))e^{\imath \theta}
$$
where $S(x)$ is the static soliton profile at the bottom of the trap. The control of the error in $Q(|\psi|)$ is reduced to estimating the multiplicative type perturbation $\chi$.

A further consequence of our analysis is that with a particular choice of the potential at the transition region, we observe a significant suppression of tunneling, which stabilizes the soliton oscillation dynamics for a long time.

\section{Soliton in the closed potential well $V(x)$}

We consider in this section motion of a soliton in a closed well
from which tunneling is impossible. The dynamics in this case is
governed by the Gross-Pitaevskii (GP) equation (Nonlinear
Schr\"odinger equation - NLS)
\begin{equation}\label{NLS}
i \partial_t\psi = - \partial_x^2\psi + \lambda
|\psi|^2\psi + V(x) \psi(x,t)
\end{equation}
(the Planck constant $\hbar =1$ and the mass $m = 1/2$). Here we
assume that we know the real function $S(x)$ of the soliton sitting
at the minimum at $x=0$ of the trap potential $V(x)$, which solves
the equation
\begin{equation}\label{NLS-a}
-E S(x) = - \partial_x^2 S(x) + \lambda S^3(x) + V(x) S(x)
\end{equation}

Then the approximate solution corresponding to a moving soliton is
looked for in the form
\begin{equation}\label{solution1}
\psi(x,t) = S(x - X(t)) e^{+ i(E t + \Phi)}
\end{equation}
with a real phase $\Phi$. $X(t)$ defines the position of the soliton
at a given time $t$. Substituting this solution into Eq. (\ref{NLS})
and separating the real and imaginary terms we get two equations

\begin{equation}\label{HJ-1}
\left[\partial_t\Phi +  \left( \partial_x \Phi
\right)^2  + W(x,t)\right] S(x - X(t)) = 0
\end{equation}
where $W(x,t) = V(x) - V(x - X(t))$ and
\begin{equation}\label{HJ-2}
\left[\partial_t X(t) - 2\partial_x\Phi \right] \partial_x S(x -
X(t)) =  \partial_x^2 \Phi S(x - X(t))
\end{equation}

Assuming now that the soliton $S(x - X(t))$ is very narrow even
function centered around $X(t)$ we may integrate Eq. (\ref{HJ-1})
over $x$ and approximately obtain the equation
\begin{equation}\label{HJ-3}
\partial_t\Phi(X,t) + \left( \partial_x \Phi(X,t)
\right)^2  + V(X) = 0
\end{equation}
which is a Hamilton-Jacoby equation describing the motion of a
particle with the mass $1/2$ at the center $X$ of the soliton. $W(X)
= V(X) - V(0)$ should have been written in Eq. (\ref{HJ-3}), however
the constant $V(0)$ in the potential may be always omitted. The
phase $\Phi(X)$ plays the role of mechanical action and the velocity
of the soliton motion is
$$
\dot X = 2\partial_X \Phi(X).
$$
Hence for $x$ close to $X$ the Ansatz
\begin{equation}\label{phase1}
\Phi(x,t) \approx  \dot X(t) x/2 + F(t)
\end{equation}
may be applied. Here the function $F(t)$ depends only on $t$. As a
result, Eq. (\ref{HJ-2}) becomes an identity. Substituting Eq.
(\ref{phase1}) into (\ref{HJ-1}) we get in the same approximation
that
\begin{equation}\label{phase2}
F(t) =\frac{1}{ 2} \left. \dot X X\right|_0^t - \int\limits_0^t
\left[\frac{1}{4}\dot X(s)^2 - V(X(s)) \right] ds
\end{equation}
which really has the form of a mechanical action with a boundary
term which does not affect equation of motion for $X(t)$.

Correspondingly the total energy of the soliton within the trap
becomes
$$
E_{tot} = -E - V(0) + \frac{1}{4} \dot X(t)^2 + V(X(t)) = \mbox{const}
$$

The corrections to the above approximate derivation are due to the
finite width of the soliton,
\begin{equation}\label{corrections}
\widetilde\beta^{-2} = \int dx x^2 S(x).
\end{equation}
which is supposed to be small compared to the characteristic scale
of the potential.

It is important to emphasize that in the case of harmonic
potential $V(x) = \omega^2 x^2/4$ the approximate solution
(\ref{solution1}) and (\ref{phase1}) for a soliton moving in the
trap becomes exact with
$$
X(t) = C_0\cos 2\omega t
$$
where $C_0$ is the oscillation amplitude and
$$
F(t) = - \frac{1}{8}\omega C_0^2 \sin 2\omega t.
$$
This can be verified by a direct substitution into Eqs. (\ref{HJ-1})
and (\ref{HJ-2}). As for the total energy it is obviously
$$
E_{tot} = -E - V(0) + \frac{1}{4}C_0^2\omega^2.
$$
For other types of nontrivial exact solutions, see e.g. \cite{r05,K09}.

\section{Estimates of the corrections}

\subsection{"Time dependent case"}

In order to estimate the precision of the applied procedure we
return to Eq. (\ref{NLS}) which describes a soliton oscillating
within a trap and radiation due to tunneling. Hence the chemical
potential $E(t)$, (Energy), and the shape of function $S(x)$ vary with time
$t$.

The general solution will be looked for using the following Ansatz (see \cite{GS05,GS06,GS07,BS03,SW04} for related constructions):
\begin{equation}\label{solution2}
\psi = S_E(x-X(t),t)(1 + \chi) e^{i\theta},
\end{equation}
where
$$
\psi_{t=0} = S(x - X_m,0).
$$
with some initial value $X_m$. $S_E(x,t)$ is the solution of the nonlinear soliton equation
$$
 -E(t)S(x,t) = (- \partial_x^2 + V(x) + F(S))S.
$$
for a given value of $E(t)$, which may vary adiabatically slow with
time. We distinguish here the rapid time dependence due to
oscillations of the soliton within the trap described by the shift
$X(t)$ and slow variation of the shape of the function $\psi_{t=0}$
due to exponentially weak tunneling.

Substituting (\ref{solution2}) into (\ref{NLS}) (the subscript $E$
is suppressed for the sake of brevity) we get
\begin{multline}
 i \dot{S} (1+\chi) e^{i\theta} +i S \dot{\chi}
e^{i\theta} -S(1+\chi)
\dot{\theta} e^{i\theta}=\\
\left[ -\partial_x^2 \psi + V\psi + [F(S(1+\chi))-F(S)+F(S)]\psi
\right]\\
= e^{i\theta}\bigl[  \left(- S'' + VS + F(S)S\right)(1+\chi)\\+
S\left(- \chi'' - 2\frac{S^\prime  }{S } \chi'  + \lambda S^2(\chi
+\chi^*)\right)\\
+ \lambda S^3(\chi^2 +2|\chi|^2+|\chi|^2\chi)\bigr]\\
+ e^{i\theta} S(1+\chi)\left[(\theta^{\prime})^2 -i\theta^{''} -2i
\theta^{\prime} \frac{ \widetilde{S}^{\prime}}{\widetilde{S}}
\right]
\end{multline}
where $\widetilde{S} = S(1 + \chi)$. Here ${\cdot }$ stands for the time derivative and $'$ stands for the $x$ derivative.

Now, we use the Ansatz
$$
\theta(x,t)=\int_0^t E(s) ds + \gamma(t) + \varphi (x,t),
$$
such that
$$
2\partial_x \theta = v_{\theta}=v_{\varphi} =2 \partial_x \varphi.
$$
Then, the Schr\"odinger equation becomes
\begin{eqnarray}\label{NLS-chi}
\imath \dot S (1+\chi)& + &\imath S \dot \chi -S(1+\chi) (\dot{
\varphi}+\dot {\gamma}) = (V-V_t)(1 + \chi)S \nonumber
\\
&+& S \left( - \chi'' -2(S^{\prime}/S) \chi'
+\lambda S^2(\chi +\chi^*)\right) \nonumber
\\
&+& \lambda S^3(\chi^2 +2|\chi|^2+|\chi|^2\chi) \nonumber
\\
&+& S(1+\chi)v_{\theta}^2/4 - \imath (\widetilde{S}^{\prime}
v_{\theta} + \widetilde{S} v_{\theta}^{\prime}/2).
\end{eqnarray}

The complexified equation for the vector
$$
{\bf K} =
\begin{pmatrix}
\chi \\ \chi^*
\end{pmatrix}
$$
can then be written as
\begin{equation}\label{modulation}
\imath  S^{-1} \dot{\textbf{S}} +\imath \dot{\bf
K} = H{\bf K} + {\cal O}(b{\bf K},b) + \mathcal{NL}({\bf
K}) +{\cal O} \left(\frac{S'}{S}v_\theta + v'_{\theta}\right).
\end{equation}
where
$$
H:=
\begin{pmatrix}
- \partial_x^2 +\lambda S^2-2(S^\prime/S)\partial_x & \lambda S^2\\
-\lambda S^2 & \partial_x^2 -\lambda S^2 +2(S^\prime/S)\partial_x
\end{pmatrix},
$$
$\textbf{S}$ stands for the vector with components $S,-S$; $b$
stands for time dependent, space localized functions, which (are
expected to) have a zero limit at infinite time. $ \mathcal{NL}$ stands for
nonlinear terms in ${\bf K}$.

\medskip

\paragraph{ Orthogonality Conditions}

\medskip

Let $H_D:=S(H - E \sigma_z )S^{-1}$
then
\begin{align}
H_D \begin{pmatrix} S\\ -S \end{pmatrix}&=0, \\
H_D^2 \begin{pmatrix} \partial_E S \\ \partial_E S \end{pmatrix}&=0.
\end{align}
Similar identities hold for the adjoint operator $H_D^*.$ The
corresponding eigenfunctions are obtained from the previous ones, by
applying the Pauli matrix $\sigma_z$.

$H_D$ has also two generalized eigenfunctions, which are small
perturbations of the vectors
$$
{\bf \eta}_1=\begin{pmatrix} \partial_x S \\ \partial_x S
\end{pmatrix}, \qquad {\bf \eta}_2= \begin{pmatrix} xS\\ -xS \end{pmatrix},
$$
with eigenvalues close to zero, provided the soliton is sufficiently
narrow. See \cite{GS05,GS06}.
$$
{\bf \xi}_1=\begin{pmatrix} \partial_E S \\ \partial_E S
\end{pmatrix}.
$$

We now impose the main {\em orthogonality conditions}, which make
the perturbation $\chi$, orthogonal to the soliton $S$.
$$
\mbox{Re} \langle S^2,\chi \rangle =0,
$$

$$
\mbox{Im} \langle \chi,SS^{'}_E\rangle=0,
$$

$$
\mbox{Im} \langle S_x',S\chi\rangle=0.
$$
where $\langle\varphi, \psi \rangle$ denotes the usual $L^2$ scalar
product of $\varphi$ and $\psi$.
The first two conditions determine the equations of $\dot E, \dot \gamma.$
The other conditions determine the boost and translation of the Soliton at time t.

For example, multiplying (\ref{modulation}) by $S$, taking the scalar product
with the vector $\sigma_3\xi_1,$ and using the orthogonality
condition, using also the equation
\begin{equation}\label{S dot}
\partial_t S(x-X(t),E(t)) = - \partial_x S(x - X(t)) \dot X(t)+
\partial_E S \dot E.
\end{equation}
for $\dot S$, integration gives
\begin{equation}\label{Gamma dot}
2 \dot \gamma  \int  S \partial_E S dx
 = \langle \sigma_3\xi_1, S
\mathcal{O}(b{\bf K},b) + S \mathcal{NL}({\bf K})+
\mathcal{O}(Sv_{\theta})\rangle.
\end{equation}

Going back to equation (\ref{NLS-chi}), if we choose
$$
\theta_h(t)=\int_0^tE(s)ds + \frac{1}{2}x\dot X(t) + q(x,t),
$$
and$$
\psi=S(1+\chi)e^{\imath \theta_h}=S|1+\chi|e^{\imath q+\imath\theta_h}=S|1+\chi|e^{\imath \phi},
 $$
we derive
$$
v_{\theta_h}=2\partial_x \theta_h=\dot X(t)=v_\phi-v_q.
$$
Finally, we derive the following equations for $\chi$:
\begin{equation}\label{chi-eq1}
0=
$$$$
-\partial_x \left[[V(x)-V(x-X)-x\ddot X/2] \phantom{\frac{S}{\tilde S}}\right.+
$$$$
\left.[\frac{S}{\tilde S}(\mbox{Im} \dot\chi+H_{\chi} \mbox{Re} \chi + 2\lambda S^2 \mbox{Re} \chi
+\mathcal O(\chi^2))]+\tilde S^{-1} \partial_E S \dot E \mbox{Im} \chi \right].
\end{equation}
From this we derive
\begin{equation}\label{chi-eq2}
-\partial_t \mbox{Im} \chi = H_{\chi} \mbox{Re} \chi+(2\lambda S^2+\dot \gamma) \mbox{Re} \chi
+\mathcal O(\chi^2)+
$$$$
\tilde S^{-1} \partial_E S \dot E \mbox{Im} \chi+[V(x)-V(x-X)-x\ddot X/2+\dot\gamma].
\end{equation}

\section{Hydrodynamic representation}

In order to find the tunneling flux from an open potential
$V_{ext}(x)$ with a soliton oscillating within the well we apply the
procedure similar to that used in our previous papers\cite{FS05,DFSS07,DFSF09,DFFS10}. First the GP equation (\ref{NLS}) for a complex function
$\psi= |\psi| \exp\{i\phi \}$ can be represented as two
hydrodynamic equations,\cite{m27,m75},
\begin{equation}\label{HR1}
\partial_t\rho(x,t) + \partial_x [\rho(x,t)v(x,t)] = 0.
\end{equation}
and
\begin{equation}\label{HR2}
\frac{1}{2}\partial_t v(x,t) + \frac{1}{2} v(x,t) \partial_x v(x,t) = - \partial_x
\left [V_{ext}(x) - \frac{\partial_x^2 \sqrt{\rho(x,t)}}{
\sqrt{\rho(x,t)}} + \lambda \rho(x,t)\right]
\end{equation}
for two real functions: the density $\rho(x,t) = |\psi|^2$ and the
velocity $v =v_{\phi}= 2 \partial_x \phi$.

\begin{theorem} { Hydrodynamic Equations.}
Let $\psi$ satisfy the NLS equation, with initial data of finite energy;
assume, moreover that the solution exists and is bounded.
Then, if for all $(x,t) \in [I_X,I_T],$ $\psi\ne 0$, then the hydrodynamic equations (\ref{HR1}) and (\ref{HR2}) for
$\rho,v$ have a unique solution identified with the corresponding NLS, via the relations
$ \rho=|\psi|^2, \psi=|\psi|e^{\imath \phi}, v=2\partial_x\phi.$
\end{theorem}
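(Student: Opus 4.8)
The plan is to prove the statement as a genuine equivalence --- NLS $\Rightarrow$ (\ref{HR1})--(\ref{HR2}) and back --- with uniqueness following from the bijectivity of the polar (Madelung) decomposition under the nonvanishing hypothesis. First I would record what the hypotheses provide. Because $\psi$ is a bounded, finite-energy solution with $\psi\neq 0$ throughout the rectangle $[I_X,I_T]$, the modulus $|\psi|$ is bounded away from zero there; hence $\rho=|\psi|^2>0$ and the phase $\phi=\arg\psi$ are well defined and as regular as $\psi$, the function $\sqrt{\rho}$ is strictly positive, and the quantum potential $\partial_x^2\sqrt{\rho}/\sqrt{\rho}$ is a finite, genuine function. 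This is precisely where the assumption $\psi\neq0$ is indispensable: it makes the map $\psi\mapsto(\rho,\phi)$ a local diffeomorphism, modulo the global $U(1)$ phase and the $2\pi$ ambiguity of $\phi$.

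For the forward direction I would substitute $\psi=\sqrt{\rho}\,e^{i\phi}$ into the NLS equation, expand $\partial_t\psi$ and $\partial_x^2\psi$, divide through by $e^{i\phi}$ (legitimate since $\psi\neq0$), and separate real and imaginary parts. The imaginary part yields the continuity equation (\ref{HR1}) with $v=2\partial_x\phi$, while the real part yields the Hamilton--Jacobi equation $\partial_t\phi+(\partial_x\phi)^2+V_{ext}-\partial_x^2\sqrt{\rho}/\sqrt{\rho}+\lambda\rho=0$, whose $x$-derivative is exactly (\ref{HR2}). This is a direct, if slightly lengthy, computation.

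The substance lies in the reconstruction and the uniqueness. Given $(\rho,v)$ with $\rho>0$ solving (\ref{HR1})--(\ref{HR2}), I would set $\mu:=\partial_x^2\sqrt{\rho}/\sqrt{\rho}-V_{ext}-\lambda\rho-\tfrac14 v^2$, the candidate for $\partial_t\phi$, and consider the one-form $\omega=\tfrac12 v\,dx+\mu\,dt$. Then closedness $d\omega=0$ is precisely the identity $\partial_t(v/2)=\partial_x\mu$, i.e. equation (\ref{HR2}); since $[I_X,I_T]$ is simply connected, $\omega$ is exact, $\omega=d\phi$, which determines $\phi$ up to an additive constant and hence $\psi:=\sqrt{\rho}\,e^{i\phi}$ up to a global constant phase. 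Running the forward computation in reverse then shows that this $\psi$ solves the NLS equation: the Hamilton--Jacobi relation holds by the construction of $\mu$, and the modulus equation is just (\ref{HR1}). Uniqueness follows because two NLS solutions producing the same $(\rho,v)$ share $|\psi|$, $\partial_x\phi$, and (via the Hamilton--Jacobi equation) $\partial_t\phi$, so they can differ only by a constant phase, which is fixed by the initial datum.

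I expect the integrability step to be the main obstacle: recognizing (\ref{HR2}) not merely as the $x$-derivative of the Hamilton--Jacobi equation but as the compatibility condition $d\omega=0$ that permits a single-valued phase to be reconstructed jointly in $x$ and $t$, together with the regularity bookkeeping near the hypothesis $\psi\neq0$ needed to keep the quantum potential and the polar decomposition legitimate.
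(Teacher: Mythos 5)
Your proof is essentially correct, but there is nothing in the paper itself to compare it against: the paper does not prove this theorem at all, it simply defers to the reference \cite{Sof} (``For a proof see [Sof]''). Your argument is the standard self-contained Madelung equivalence, and its key structural insight is the right one: the forward direction is the routine separation of real and imaginary parts of the NLS after the polar substitution $\psi=\sqrt{\rho}\,e^{i\phi}$ (legitimate on $[I_X,I_T]$ since there $|\psi|$ is continuous and nonvanishing, hence bounded away from zero), while the substantive converse is the observation that (\ref{HR2}) is exactly the closedness condition $d\omega=0$ for the one-form $\omega=\tfrac12 v\,dx+\mu\,dt$ with $\mu=\partial_x^2\sqrt{\rho}/\sqrt{\rho}-V_{ext}-\lambda\rho-\tfrac14v^2$, so that on the simply connected rectangle a single-valued phase $\phi$ with $d\phi=\omega$ can be reconstructed, unique up to an additive constant. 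Two caveats should be made explicit before this can stand as a complete proof. First, the hypotheses give only finite energy ($H^1$) and boundedness, so the pointwise manipulations involving $\partial_x^2\sqrt{\rho}$ and the Hamilton--Jacobi equation must either be interpreted distributionally or be preceded by a propagation-of-regularity argument; this analytic bookkeeping is presumably the content of the cited reference, and it is the part your sketch treats most lightly. Second, your uniqueness paragraph establishes injectivity of the map $\psi\mapsto(\rho,v)$ modulo a constant phase, but uniqueness of the solution of the hydrodynamic system itself additionally requires uniqueness of the NLS Cauchy problem in the energy class (standard for the one-dimensional cubic NLS), so that the $\psi$ reconstructed from any hydrodynamic solution with the given initial data is forced to coincide with the given NLS solution; you use this implicitly when you say the constant phase ``is fixed by the initial datum.'' With those two points addressed, your argument is a legitimate, self-contained alternative to the paper's citation, and in that sense more informative than what the paper provides.
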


For a proof see \cite{Sof}. The general problem of approximating the Schr\"odinger type equations with kinetic type or fluid equations,
has been studied extensively. It is important in understanding semiclassical methods\cite{GJ,Gols}, large particle system limits
and more\cite{Per,BGP,DLev}.

In order to solve the problem we choose $|\psi| = S_0(x - X(t))$
obtained near the bottom of the potential Fig. 1. That is $S_0(x)$
is an exact solution stationary at the bottom of the potential well.
We assume that the initial condition is that the solution at time
$t=0$ is given by the soliton profile $S(x-X_m)$ with $X_m$ small.
Here, it is assumed that the potential have a local minimum at zero.
(The function $S_0$ is taken to solve the nonlinear soliton equation
(\ref{NLS-a}).) Then
\begin{equation}\label{Euler}
\partial_t v(x,t) +  v(x,t) \partial_x v(x,t) = -2 \partial_x (U_1+U_2)
\end{equation}
with
\begin{equation}
U_1 = \left[V(x) - V(x-X(t)) \right]
\end{equation}
and
\begin{equation}
U_2= {\cal F}(S(1 + \chi))- {\cal F}(S) + Q  + V(x - X(t)) + {\cal
F}(S)
\end{equation}
where $Q = - \frac{\partial_x^2 |\psi|}{|\psi|}$ is the quantum
potential (QP) and ${\cal F}$ stands for the nonlinear term. We also
use the ansatz
\begin{equation}\label{CHI}
\psi= S(x - X(t))(1 + \chi(x,t)) e^{i\theta}
\end{equation}
where $\theta = \theta_h + q$, with $\theta_h$ solving the harmonic
problem with the frequency $\omega$.

The derivative
$$
\partial_x[V_h(x)-V_h(x-X(t))] = \omega^2 X(t)/2
$$
for some constant $\omega$, so that $(m=1/2)$,
$$
\partial_t^2 X(t ) = - \omega^2 X(t)
$$
and
$$
v(x,t)= \dot X(t) + w(x,t).
$$
We use that $Q(S)=-\partial_x^2 S/S,$ and
$$
- \frac{\partial_x^2 S}{S}  + V(x - X(t)) + {\cal
F}(S) = -E = const
$$
and $\psi = S(1 + \chi) e^{i \theta}$, as well as the estimate of
$Q(\psi) - Q(S)$ to be presented in the next subsection. Then the
equation for $w$ becomes
\begin{equation}\label{6}
\partial_t w(x,t) + \dot X(t) \partial_x w(x,t) + w(x,t) \partial_x w(x,t) = -2
\partial_x W_1(x,t)
\end{equation}
with
$$
W_1(x,t)= V(x)-V_h(x) + V_h(x-X(t)) - V(x-X(t)) +
$$$$
{\cal F}(S(1+\chi))- {\cal F}(S) + Q(\psi)-Q(S),
$$
where the correction $w(x,t)$ appears due to tunneling and
corresponds to the escape of the matter from the trap.
Alternatively, the equation for $v$ is
\begin{equation}\label{7}
\partial_t v(x,t) +  v(x,t)\partial_x v(x,t) =
$$$$
-2 \partial_x [W(x,t)(1 +\chi )+{\cal O}(S\chi)+Q(\psi)-Q(S) + {\cal O}(\chi^2 )].
\end{equation}
with
$$
W(x,t)= V(x)-V(x-X(t))
$$
The solution of this equation can be found from the solution of the
corresponding classical mechanical system. After approximating the
quantum potential as $Q(\psi)\sim Q(S)+ {\cal O}(\chi)$, we obtain
the equation
\begin{equation}\label{Burgers}
\partial_t v(x,t) +  v(x,t)\partial_x v(x,t) =
$$$$
-2 \partial_x [W(x,t)(1 + {\cal \chi}) - L( \mbox{Re} \chi) +{\cal O}(S\chi^2+ \chi \partial_x \chi)].
\end{equation}
for the velocity $v$.
$$
L(\mbox{Re} \chi)\equiv H_{\chi} \mbox{Re} \chi +2\lambda S^2 \mbox{Re} \chi +{\cal O}(\dot E S \mbox{Re} \chi).
$$
See Section 4 for the estimate of $Q(\psi)-Q(S).$

To find the solution for the classical system,
let us assume first that the potential looks like piecewise
harmonic: It is harmonic well, around the origin, up to distance of
order $1/\omega$, then a transition region, inverted harmonic, on an
interval of order 1, and then goes down as the same harmonic
potential, all the way to zero.
$$
V(x) = \left\{
\begin{array}{cc}
\frac{1}{4}\omega^2 x^2, &\ |x| \leq \frac{1}{\omega}, \\
- \frac{1}{4} \omega^2 (x \mp \frac{1}{\omega})^2 \pm \omega^2 {\cal
O}(x^3) & |x \mp \frac{1}{\omega}|  \leq \frac{\delta}{\omega},
\\
\frac{1}{4}\omega^2 (x - \frac{2}{\omega})^2, &\ \frac{2}{\omega}
\geq x > \frac{1
+ \delta}{\omega}, \\
\frac{1}{4}\omega^2 (x + \frac{2}{\omega})^2, &\ - \frac{2}{\omega}
\leq x < - \frac{1 + \delta}{\omega}, \\
0 & |x| > \frac{2}{\omega}
\end{array}\right.
$$
with $0 < \delta \ll 1$.
\begin{theorem}{Burgers' Equation.}

Consider the Burgers equation defined in (\ref{Burgers}), and with zero initial data.
The solution of this equation, without the second order terms corrections,
in the transition region $\frac{2}{\omega}\geq x > \frac{1 + \delta}{\omega}$,
is given by
\begin{equation}\label{tunnel-vel}
v(x,t)=\frac{1}{2}\alpha\epsilon^2\omega^2 \left[ t-(t^2-\beta\epsilon^{-2}\omega^{-2}(x-(1+\delta)/\omega))^{1/2} \right]
\end{equation}
for $t^2\gg \beta\epsilon^{-2}\omega^{-2}(x-(1+\delta)/\omega))$
.
Moreover, for $x$ in the transition region, and $(x,t)$ not satisfying the above inequality, the velocity $v(x,t)$ is zero.
\end{theorem}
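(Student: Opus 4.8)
The plan is to solve the inviscid Burgers (Hopf) equation (\ref{Burgers}) in the transition region by the method of characteristics, treating its right-hand side as a prescribed forcing and discarding, as the statement permits, the second-order terms $\mathcal{O}(S\chi^2+\chi\partial_x\chi)$ together with the quantum-potential error $Q(\psi)-Q(S)$ (controlled by the estimate announced after (\ref{Burgers})). Writing the equation as $\partial_t v+v\partial_x v=g(x,t)$, the characteristics are the curves $\dot x=v$, $\dot v=g$, along which the velocity is transported and accelerated by $g$. The essential input is the behaviour of $g$ near the inner edge $x_0:=(1+\delta)/\omega$ of the transition region, where matter released from the harmonic well first enters. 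There $\chi$ is exponentially small, so $g\approx-2\partial_x W$ with $W=V(x)-V(x-X(t))$; after removing the part of the force already balanced by the harmonic oscillation (the decomposition $v=\dot X+w$, $\ddot X=-\omega^2 X$ of (\ref{6})), the residual net force driving the escaping matter is of size $\mathcal{O}(\epsilon^2\omega^2)$, which is what fixes the amplitude constant $\alpha$. Extracting this rate from the modulation equations of Section~3 (in particular (\ref{Gamma dot}) and the equations for $\dot E,\dot\gamma$, and the flux pumped across $x_0$) is the quantitative heart of the ``leading order radiation.''

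First I would integrate the characteristic system with the zero initial data of the statement and the inflow generated at $x_0$, then eliminate the characteristic (launch-time) parameter to obtain the implicit relation
\begin{equation}\label{implicit}
v^2-2A\,t\,v+A^2 c\,(x-x_0)=0,\qquad A:=\tfrac12\alpha\epsilon^2\omega^2,\quad c:=\beta\epsilon^{-2}\omega^{-2},
\end{equation}
which records that each fluid element carries the velocity it has acquired by the time it reaches $(x,t)$; note that at $t=0$ it forces $v\equiv 0$ in the transition region, consistent with zero initial data. Solving the quadratic (\ref{implicit}) yields the two branches $v=A\bigl[t\pm(t^2-c(x-x_0))^{1/2}\bigr]$, and selecting the outgoing branch that connects continuously to the quiescent state gives precisely (\ref{tunnel-vel}). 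I would then confirm the claim by direct substitution: a short computation gives
$$
\partial_t v+v\partial_x v=\Bigl(A-\tfrac12 A^2 c\Bigr)\Bigl(1-t R^{-1}\Bigr),\qquad R:=(t^2-c(x-x_0))^{1/2},
$$
so that in the stated regime $t^2\gg c(x-x_0)$ one has $1-tR^{-1}=\mathcal{O}\!\left(c(x-x_0)/t^2\right)$ and (\ref{tunnel-vel}) solves the equation up to terms of exactly the order of the discarded corrections. With the normalization $\alpha\beta=4$, i.e. $Ac=2$, the right-hand side vanishes identically and (\ref{tunnel-vel}) is an exact solution of the homogeneous Hopf equation.

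Next I would establish the dichotomy in the transition region. The locus $t^2=c(x-x_0)$, i.e. $x_f(t)=x_0+(\epsilon^2\omega^2/\beta)\,t^2$, is the envelope (caustic) of the characteristic family and marks the leading front, advancing with the accelerating law $x_f-x_0\propto t^2$ characteristic of matter expelled at a constant effective rate. Ahead of the front, $x>x_f$, no characteristic from the source has arrived, the argument of the square root in (\ref{tunnel-vel}) is negative, and the solution is the undisturbed state $v\equiv 0$; this is the second assertion of the theorem. Behind the front the branch (\ref{tunnel-vel}) applies, and the hypothesis $t^2\gg c(x-x_0)$ confines the estimate to the bulk region well behind the front, away from the caustic.

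The main obstacle I anticipate is the rigorous treatment near $x_f$, where the characteristics focus: there the classical solution becomes multivalued and a shock forms, $1-tR^{-1}$ blows up, and the discarded second-order terms and the quantum-potential error $Q(\psi)-Q(S)$ are no longer negligible. This is precisely why the statement is phrased with the strong inequality $t^2\gg c(x-x_0)$ rather than as an identity on the whole transition region, and why control is asserted only where the single-branch, effectively forcing-free approximation dominates. Making the argument uniform would require the bound on $Q(\psi)-Q(S)$ (through the estimate on the multiplicative perturbation $\chi$) together with a standard weak-solution/entropy selection to pick out the physical branch and to resolve the caustic; but for the leading-order radiation claimed here it suffices to exhibit (\ref{tunnel-vel}) as the characteristic solution in the bulk and to verify the front condition $t^2=c(x-x_0)$ beyond which $v\equiv 0$.
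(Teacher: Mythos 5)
Your proposal follows the same route as the paper's proof: treat (\ref{Burgers}) as a forced Hopf equation, solve it by characteristics emanating from the barrier region, reduce the answer to a quadratic relation between position, time and launch time, and select one of the two roots; your direct-substitution check, your identification of the front $t^2=c(x-x_0)$ as the caustic of the characteristic family (writing, as in your proposal, $A=\tfrac12\alpha\epsilon^2\omega^2$, $c=\beta\epsilon^{-2}\omega^{-2}$, $x_0=(1+\delta)/\omega$), and your remark that the formula is an exact free-Burgers solution for a suitable normalization of $\alpha\beta$ are all correct and sharper than anything written in the paper. Nevertheless there are two gaps. The first concerns the effective forcing: you assert that the residual force is $\mathcal{O}(\epsilon^2\omega^2)$ and propose to extract it from the modulation equations of Section 3 (e.g. (\ref{Gamma dot})); that is not its source, and deriving it is roughly half of the paper's proof. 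The paper Taylor-expands $V(x)-V(x-X(t))$ in $X$, notes that the linear term $\sim V''(x)X$ is purely oscillatory and averages to zero, while the quadratic term $\tfrac12 V'''(x)X^2$ has a definite outgoing sign because $V'''>0$ (of order $\omega^2$ on an interval of size $\mathcal{O}(1)$) in the transition region, and then averages $X^2=\epsilon^2\sin^2\omega t$ to obtain the mean force $\tfrac12\alpha\epsilon^2\omega^2$ and the exit time $t_0\sim 1/(\omega\epsilon)$. This computation is precisely what produces the $\epsilon^2\omega^2$ velocity scale and the $\epsilon^{-2}\omega^{-2}$ length scale in (\ref{tunnel-vel}); a proof of the theorem cannot treat it as an external input, since otherwise the constants $\alpha,\beta$ and the very structure of the formula are unexplained.

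The second gap is the branch selection, which is the crux of the theorem: the two roots differ by whether the escaping velocity is small (the root you and the paper keep, of size $\mathcal{O}(Ac(x-x_0)/t)$ in the bulk, hence tunneling suppression) or grows like $At$ (the other root). The paper selects the smaller launch time by an explicit principle: among the two characteristics through $(x,t)$, the Burgers solution is the one of least action, and the action $\propto (t')^3$ is minimized by the smaller $t'$. Your criterion, the branch ``that connects continuously to the quiescent state,'' does not actually discriminate: at the front both roots equal $At$, so neither is continuous with $v=0$ there (as you yourself note, a shock sits at the front), while at the inner edge $x\to x_0^+$ your chosen root tends to $0$ whereas the inflow velocity fed in by the transition region at time $t$ is $\sim At$, so continuity at that boundary would in fact favor the other root. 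What your root does satisfy is smallness in the bulk $t^2\gg c(x-x_0)$ --- but that is the conclusion of the theorem, not a selection principle. Since the characteristics genuinely cross, some explicit criterion (the paper's least-action argument, or the Rankine--Hugoniot/entropy construction that you mention but do not carry out) must be invoked; without it, the step that decides between tunneling suppression and its opposite is unjustified in your write-up.
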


\textbf{Remark}
The above result shows that tunneling is suppressed for this choice of the potential barrier, for times of order $\beta^{1/2}\epsilon^{-1}\omega^{-1/2}.$

\begin{proof}

In this case, the potential field acting on
the classical particle in the transition region is approximately
given by
\begin{equation}
V(x) - V(x-X)
$$$$
 = V'(x)X - (1/2) V''(x) X^2+{\cal O}(X^3)
 $$$$
 \sim \omega^2 x X - \alpha(x)
\omega^2 X^2 + C(t) + {\cal O}(X^3)
\end{equation}
with a positive $ s_n(x)\alpha$, since in the transition region,
the third derivative of $V$ is positive, of order $\omega^2$ on an
interval of a size ${\cal O}(1)$.
Here, $s_n(x)$ denotes the sign of $x.$
So, the force is outgoing,
\begin{equation}
F \sim - \omega^2 X + \alpha_0 \omega^2 X^2 = - \omega^2 X + [(1/2)
\alpha - \cos(2\omega t)]\epsilon^2 \omega^2.
\end{equation}
where we choose $\alpha(x)\sim\alpha_0 x+\beta_0,$ and
since $ X(t)\equiv \epsilon \sin \omega t$ with a small oscillation
amplitude $\epsilon$.

So, the classical particle goes out of the transition region, with
the velocity $v_0 \sim \epsilon ^2 (a/2)t_0+d\dot X$, the exit time
is $t_0$.  Hence
\begin{equation}
\frac{1}{2} \epsilon^2 \omega^2 t_0^2 \sim 1
\end{equation}
so that
\begin{equation}
t_0 \sim \frac{1}{\omega \epsilon}
\end{equation}
(The third derivative is of the order $\omega^2$ only on an interval
of size ${\cal O}(1))$. To the leading order, $d=1.$ So, outside the
transition region, and AFTER time $t'$, the acceleration is zero
and the velocity is therefore given by $v(x_0,t')$, and up to $\dot X$ is positive,
of the order $\epsilon^2 \omega^2 t',$  for $(x - 1/\omega) > 0$ .
So, to find the solution for the velocity at $(x,t)$, we get
$$
x-x_0=v(x_0,t')(t-t')
$$
for some $t'\le t.$
Since the acceleration is constant, when it is nonzero, by our choice of the potential,
we get that:
$$
x-x_0=\frac{1}{2}\alpha\epsilon^2\omega^2 t'(t-t').
$$
Solving for $t'$, gives TWO possible solutions. The solution of the Burgers equation is the one which minimizes the action.
The action is proportional to $(t')^3,$ and therefore is minimized by the smaller choice of $t'.$
This gives the result of the theorem.
\end{proof}

\subsection{Corrections to the quantum potential}

Here we show that assuming small $\chi$, linear corrections to the quantum
potential
$$
Q = - \frac{\partial_x^2 |\psi|}{|\psi|}
$$
make important contribution.
We will prove that:
\begin{theorem}
Suppose that $|\chi|<<1$, and the solution of the NLS is given as before by our Ansatz , in terms of the Soliton $S,$, and the correction
$\chi.$
Then the Quantum potential $Q$ is given by
\begin{equation}\label{QP3}
Q = - \frac{\partial_x^2 S}{S} +H_{\chi} \mbox{Re} \chi + {\cal
O}(\chi
\partial_x\chi) + \mbox{Re} {\cal NL}(\chi,\chi^*)
$$$$
Q= - \frac{\partial_x^2 S}{S} -\partial_t \mbox{Im } \chi -[V(x)-V(x-X(t))]
$$$$
+L(\mbox{Re}\chi)+ \mbox{Re} \mathcal{NL}(\chi,\chi^*)+ {\cal
O}(\chi\partial_x\chi)+{\cal O}(\mbox{Im} \chi \partial_t \mbox{Re}\chi).
\end{equation}
with
$$
H_\chi = - \partial_x^2 - \frac{2 \partial_x S}{S} \partial_x.
$$

\end{theorem}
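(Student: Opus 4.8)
The plan is to reduce the whole computation to the modulus $|\psi|$, since the phase $e^{i\theta}$ drops out of $Q = -\partial_x^2|\psi|/|\psi|$. Because $|\psi| = S\,|1+\chi|$, the first step is to expand the modulus factor $m := |1+\chi| = \sqrt{1 + 2\,\mbox{Re}\,\chi + |\chi|^2}$. A binomial expansion for $|\chi|\ll 1$ gives $m = 1 + \mbox{Re}\,\chi + \tfrac12(\mbox{Im}\,\chi)^2 + \mathcal{O}(|\chi|^3)$, so that to linear order $m = 1 + \mbox{Re}\,\chi$ with a purely quadratic remainder. In parallel I record the argument $\arg(1+\chi) = \mbox{Im}\,\chi - \mbox{Re}\,\chi\,\mbox{Im}\,\chi + \mathcal{O}(|\chi|^3)$, which will be needed for the second representation.

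Second, I substitute $|\psi| = S m$ into $Q$ and apply the Leibniz rule, $\partial_x^2(Sm) = S''m + 2S'm' + Sm''$, to obtain the exact identity $Q = -S''/S - 2(S'/S)(m'/m) - m''/m$. The first term is precisely $-\partial_x^2 S/S$. Using $m = 1 + \mbox{Re}\,\chi + (\text{quadratic})$ together with $1/m = 1 - \mbox{Re}\,\chi + (\text{quadratic})$, the two correction terms expand at linear order as $-2(S'/S)\partial_x\mbox{Re}\,\chi - \partial_x^2\mbox{Re}\,\chi$, which is exactly $H_\chi\,\mbox{Re}\,\chi$ with $H_\chi = -\partial_x^2 - (2\partial_x S/S)\partial_x$. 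Every remaining contribution is a product of $\chi$ with one of its derivatives (e.g. $\mbox{Re}\,\chi\,\partial_x\mbox{Re}\,\chi$ arising from $m'/m$, and the quadratic pieces of $m$), and these are collected into $\mathcal{O}(\chi\partial_x\chi)$ and $\mbox{Re}\,\mathcal{NL}(\chi,\chi^*)$. This yields the first line of (\ref{QP3}).

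Third, to reach the second representation I invoke the evolution equation (\ref{chi-eq2}), which is the imaginary part of the NLS and relates the spatial operator $H_\chi\,\mbox{Re}\,\chi$ to the time derivative $-\partial_t\mbox{Im}\,\chi$, the potential difference $[V(x)-V(x-X)]$, and the lower-order modulation terms assembled into $L(\mbox{Re}\,\chi)$. Substituting this identity for $H_\chi\,\mbox{Re}\,\chi$ into the first line of (\ref{QP3}) and regrouping the leading pieces produces the $-\partial_t\mbox{Im}\,\chi$, the $-[V(x)-V(x-X)]$ and the $L(\mbox{Re}\,\chi)$ contributions of the second line.

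The genuinely delicate point, and the step I expect to be the main obstacle, is the bookkeeping of the quadratic terms in the second representation, in particular the origin of the error $\mathcal{O}(\mbox{Im}\,\chi\,\partial_t\mbox{Re}\,\chi)$. This term is absent from the first representation and enters only upon using the evolution equation, because the quantity whose time derivative genuinely appears in the continuity/imaginary-part equation is the true phase increment $\arg(1+\chi)$, not $\mbox{Im}\,\chi$ itself. From the first step, $\partial_t\arg(1+\chi) = \partial_t\mbox{Im}\,\chi - \mbox{Im}\,\chi\,\partial_t\mbox{Re}\,\chi - \mbox{Re}\,\chi\,\partial_t\mbox{Im}\,\chi + \cdots$, so the mismatch between the two is precisely of order $\mathcal{O}(\mbox{Im}\,\chi\,\partial_t\mbox{Re}\,\chi)$. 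The hard part will be to verify that all such cross terms, together with the quadratic remainders from $m$, from $1/m$ and from $\arg(1+\chi)$, assemble consistently into the stated error classes $\mathcal{O}(\chi\partial_x\chi)$, $\mathcal{O}(\mbox{Im}\,\chi\,\partial_t\mbox{Re}\,\chi)$ and $\mbox{Re}\,\mathcal{NL}$, leaving no uncontrolled linear remainder; this requires carrying the second-order expansions simultaneously and matching them against the modulation terms discarded in (\ref{chi-eq2}).
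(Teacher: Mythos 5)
Your skeleton does match the paper's: the same exact identity $Q = -\frac{\partial_x^2 S}{S} - \frac{2\partial_x S}{S}\frac{\partial_x|1+\chi|}{|1+\chi|} - \frac{\partial_x^2|1+\chi|}{|1+\chi|}$, identification of the linear part with $H_\chi \mbox{Re}\,\chi$, and then substitution of the imaginary-part evolution equation (\ref{chi-eq2}) to pass to the second representation. However, there is a genuine gap in your quadratic bookkeeping, and it sits exactly where the paper (see its introduction) locates the main difficulty of the whole analysis. Your second step asserts that ``every remaining contribution is a product of $\chi$ with one of its derivatives.'' That is false: expanding $-\partial_x^2|1+\chi|/|1+\chi|$ with $|1+\chi| = 1 + \mbox{Re}\,\chi + \tfrac12(\mbox{Im}\,\chi)^2 + \mathcal{O}(\chi^3)$ produces at quadratic order the terms $(\partial_x \mbox{Im}\,\chi)^2$ and, worse, $\mbox{Im}\,\chi\,\partial_x^2 \mbox{Im}\,\chi$ (these appear explicitly in the paper's expansion (\ref{der7})). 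These involve squares of first derivatives and second derivatives of $\chi$; they are not of the form $\chi\,\partial_x\chi$ and cannot simply be declared $\mathcal{O}(\chi\partial_x\chi)$ or swept into $\mathcal{NL}$ --- controlling them is the actual content of the theorem.

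The paper's resolution, which your proposal never invokes, is to use the \emph{other} half of the linearized system, $\partial_t \mbox{Re}\,\chi = H_\chi \mbox{Im}\,\chi + \mathcal{O}(S^2\chi + \chi^2)$, equation (\ref{Chi eq Re}), to eliminate $\partial_x^2 \mbox{Im}\,\chi$: this converts $\mbox{Im}\,\chi\,\partial_x^2\mbox{Im}\,\chi$ into $-\mbox{Im}\,\chi\,\partial_t\mbox{Re}\,\chi$ plus genuine $\mathcal{O}(\chi\partial_x\chi)$ terms, and that trade of second space derivatives for time derivatives is the true origin of the error $\mathcal{O}(\mbox{Im}\,\chi\,\partial_t\mbox{Re}\,\chi)$ in (\ref{QP3}). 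Your alternative attribution of that error to the mismatch between $\partial_t\arg(1+\chi)$ and $\partial_t\mbox{Im}\,\chi$ does not work as stated: the equation being substituted, (\ref{chi-eq2}), is already written for $\partial_t\mbox{Im}\,\chi$ itself (any such quadratic discrepancy is inside its $\mathcal{O}(\chi^2)$), so no mismatch arises at the substitution step; moreover your mechanism would also generate a term $\mathcal{O}(\mbox{Re}\,\chi\,\partial_t\mbox{Im}\,\chi)$ that you neither control nor match against the statement, and it leaves the problematic $\mbox{Im}\,\chi\,\partial_x^2\mbox{Im}\,\chi$ term in your first representation entirely untreated. You correctly flagged this bookkeeping as the expected obstacle; the missing idea is precisely the use of (\ref{Chi eq Re}) to remove second $x$-derivatives of $\mbox{Im}\,\chi$ in favor of $\partial_t\mbox{Re}\,\chi$.
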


\begin{proof}
 Using
$$
|\psi| = S |1 + \chi|
$$
we write
\begin{equation}\label{QP1}
Q = - \frac{\partial_x^2 S}{S} - \frac{2\partial_x S\partial_x |1 +
\chi|}{S |1 + \chi|} - \frac{\partial^2_x |1 + \chi|}{|1 + \chi|}
\end{equation}
We may now calculate the derivatives
$$
\partial_x |1 + \chi| = \frac{\partial_x \mbox{Re}\chi +
(\partial_x \mbox{Re} \chi) \mbox{Re} \chi + (\partial_x \mbox{Im}
\chi) \mbox{Im} \chi}{|1 + \chi|}
$$
and
$$
\partial_x^2 |1 + \chi| = \frac{\partial_x [\partial_x \mbox{Re} \chi +
(\partial_x \mbox{Re} \chi) \mbox{Re} \chi + (\partial_x \mbox{Im}
\chi) \mbox{Im} \chi]}{|1 + \chi|} - \frac{(\partial_x \mbox{Re}
\chi)^2}{|1 + \chi|^2} + O(\chi^3)
$$

\begin{equation}\label{der7}
Q = - \frac{\partial_x^2 S}{S} - \frac{2\partial_x S}{S} \frac{1}{|1 + \chi|} \left[
\partial_x \mbox{Re}\chi (1 + \mbox{Re}\chi) + (\partial_x \mbox{Im}\chi) \mbox{Im}\chi)
\right] -
$$$$
\frac{1}{|1 + \chi|} \left[\partial_x^2 \mbox{Re}\chi (1 +
\mbox{Re}\chi) + (\partial_x \mbox{Re}\chi)^2
+(\partial_x\mbox{Im}\chi)^2+ \mbox{Im}\chi\partial_x^2 \mbox{Im}\chi\right]
$$$$
+ {\cal O}(\chi^3)
\end{equation}
since
$$
(\partial_x\mbox{Re}\chi)^2 - \frac{(\partial_x\mbox{Re}\chi)^2}{|1
+ \chi|} = {\cal O}(\chi^3).
$$

Then the force coming from $Q$ is
$$
- \partial_x Q = - \partial_x \left( - \frac{\partial_x^2S}{S} - \frac{2\partial_xS}{S}
(\partial_x \mbox{Re}\chi) - (\partial^2_x \mbox{Re}\chi) \right)
$$$$
- \partial_x \left[\left(\frac{1}{|1 + \chi|} - 1 \right)\left(-
\frac{2\partial_x S}{S} (\partial_x \mbox{Re}\chi) - (\partial^2_x \mbox{Re}\chi)
\right)\right]
$$$$
- \frac{1}{|1 + \chi|}\left\{- \partial_x \left[ \mbox{Re}\chi
\partial_x^2 \mbox{Re}\chi + (\partial_x
\mbox{Im}\chi)^2\right]\right\} + {\cal O}(\chi^2 \partial_x\chi)+\partial_x{\cal O}(\partial_x^2(\mbox{Im}\chi)^2)
$$$$
= {\cal O}(\chi \partial_x\chi) + {\cal O}(\partial_t(\chi
\partial_x\chi))
$$

Next we eliminate $\partial_x^2 \mbox{Re}\chi$, $\partial_x^2 \mbox{Im}\chi$
in the above equations, using the equations
\begin{equation}\label{Chi eq Im}
-\partial_t \mbox{Im}\chi = H_{\chi} \mbox{Re}\chi +{\cal O }(S^2 \chi + \chi^2)+G(W(x,X(t))).
\end{equation}
\begin{equation}\label{Chi eq Re}
\partial_t \mbox{Re} \chi= H_{\chi} \mbox{Im}\chi +{\cal O }(S^2 \chi + \chi^2).
\end{equation}

Therefore the quantum potential becomes in the leading order
\begin{equation}\label{QP2}
Q = - \frac{\partial_x^2 S}{S} - \frac{2\partial_x S }{S} \partial_x
\mbox{Re} \chi - \partial^2_x \mbox{Re} \chi + {\cal O}(\chi
\partial_x \chi)
$$$$
+{\cal O}((\partial_x \mbox{Im}\chi)^2+ \mbox{Im} \chi\partial_t \mbox{Re}\chi).
\end{equation}

Now we use equation
$$
i \partial_t \chi = H_\chi \chi + {\cal L}(\chi) + {\cal NL}(\chi,\chi^*)+G(W(x,X(t)))
$$
where ${\cal L}$ and ${\cal NL}$ stand for linear and nonlinear terms in $\chi$.
It allows us to write
$$
-\partial_t \mbox{Im} \chi = -\left(\partial_x^2 + \frac{2 \partial_x
S}{S} \partial_x \right) \mbox{Re} \chi + \mbox{Re}(L(S\chi) +
\mathcal{NL}(\chi,\chi^*))+G(W(x,X(t))),
$$
which implies the statement of the Theorem.
\end{proof}

In order to proceed further on we will need some relations between
the correction $q$ to the soliton phase due to $\chi$ and the phase
$\eta$ of the latter. They are connected by equation
\begin{equation}\label{phase3}
1 + |\chi| e^{i\eta} = |1 + \chi| e^{iq}.
\end{equation}
We will also need the derivative
\begin{equation}\label{phase4}
\partial_x |1 + \chi| = \frac{1}{2\sqrt{1 + |\chi^2| + 2 |\chi| \cos
\eta}} [\partial_x |\chi|^2 + 2 \cos\eta \partial_x |\chi| + 2
|\chi| \sin \eta \partial_x \eta].
\end{equation}

Equation (\ref{phase3}) for complex functions is equivalent to
\begin{eqnarray}
1 + |\chi| \cos\eta = |1 + \chi| \cos q \label{phase5a},\\
|\chi| \sin\eta = |1 + \chi| \sin q \label{phase5b}.
\end{eqnarray}

\section{Hydrodynamic Modulation Equations}

In this section we combine the estimate for the velocity being
outgoing in the outer well, together with the equations for $\chi,$
$\dot E,$ $R,$ to prove that the soliton stabilizes. Recall the
ansatz for the solution, in terms of the soliton $S$ and the phase
$\theta$:

$$
\psi=S(1+\chi)e^{\imath \theta} :=S|1 + \chi|e^{\imath [\theta+q]}.
$$
Since we have shown that the velocity is outgoing, modulo the
oscillating part, we have that (choosing $\theta=\theta_h$),
$$
\mathrm{sign}(x)\partial_x q \sim \omega^2 \epsilon^2t >0, \ \ t
\leq t_0 \sim \frac{1}{\omega\epsilon} .
$$
\begin{theorem}{Soliton Energy Decay.}

Let the solution of NLS be such that $|\chi|+|\chi^{'}|<<1, \partial_x q>0.$
Assume, moreover that the soliton solution $S$ is stable.
Then $$
\dot E\leq -c(\omega) \mbox{Re} \chi\partial_x q(x_0)
$$
for some $x_0$ at the transition region, and $c(\omega)$ is a constant of order
$\mathcal O(e^{-bx_0})$, and $b$ is a positive constant related to the decay rate of the soliton (in $x$) at large distances.
\end{theorem}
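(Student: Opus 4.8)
The plan is to derive the energy decay law from the orthogonality condition that fixes $\dot E$, feeding into it the fact—established in the Burgers' equation theorem—that the velocity (equivalently $\partial_x q$) is strictly outgoing in the transition region. Recall from Eq.~(\ref{Gamma dot}) and the surrounding discussion that the first two orthogonality conditions, $\mbox{Re}\langle S^2,\chi\rangle=0$ and $\mbox{Im}\langle\chi,SS'_E\rangle=0$, determine $\dot E$ and $\dot\gamma$. First I would extract the explicit equation for $\dot E$ by projecting the modulation equation (\ref{modulation}) onto the appropriate generalized eigenvector of $H_D^*$ (the one dual to $\xi_1=(\partial_E S,\partial_E S)^T$), exactly as was done for $\dot\gamma$. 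Using $\int S\partial_E S\,dx\ne 0$ (the standard stability/slope condition $\partial_E \|S\|^2\ne 0$, which is where the assumption that $S$ is stable enters), this yields
$$
\dot E \,\langle \partial_E S, S\rangle \;=\; \langle \sigma_3\xi_1,\, S\,\mathcal{O}(b\mathbf{K},b)+S\,\mathcal{NL}(\mathbf{K})+\mathcal{O}(Sv_\theta)\rangle,
$$
so that $\dot E$ is controlled by the same source terms, dominated by the $v_\theta$ (equivalently $v_q$, $\partial_x q$) contribution.

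Next I would isolate the leading-order piece of the right-hand side. The key input is the hydrodynamic continuity equation (\ref{HR1}): the rate of change of the internal norm (mass) of the soliton equals the flux $\rho v$ evaluated at the edge of the well. Since the density $\rho=S^2|1+\chi|^2$ is exponentially small at the transition point $x_0$, of size $\mathcal{O}(e^{-bx_0})$ where $b$ is the soliton decay rate, and the velocity there is the outgoing tunneling velocity with $\partial_x q>0$, the mass flux is strictly outward. I would then relate the mass loss $\partial_t\|S\|^2$ to $\dot E$ through the chain rule $\partial_t\|S\|^2=2\dot E\langle S,\partial_E S\rangle$, and express the boundary flux in terms of $\mbox{Re}\,\chi\,\partial_x q(x_0)$ by writing $\rho v \sim S^2(1+2\mbox{Re}\,\chi)\,v$ and retaining the leading correction tied to the outgoing phase gradient. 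Combining these gives
$$
\dot E \;\leq\; -\,c(\omega)\,\mbox{Re}\,\chi\,\partial_x q(x_0),
\qquad c(\omega)=\mathcal{O}(e^{-bx_0}),
$$
with the sign fixed by the established outgoing property $\mathrm{sign}(x)\,\partial_x q\sim\omega^2\epsilon^2 t>0$ for $t\le t_0$.

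The main obstacle will be controlling the sign rigorously: one must show that the oscillating (non-secular) parts of the source terms average to something of strictly definite sign, so that the genuine tunneling flux dominates the reversible oscillatory exchange between the soliton core and its tail. Concretely, the force in (\ref{Burgers}) contains both the oscillating drive $\propto\cos(2\omega t)$ and the secular outgoing piece $\propto(1/2)\alpha\epsilon^2\omega^2 t$; I would argue, using the Burgers' equation theorem, that once matter has exited past the transition region it does not return (the velocity there is nonnegative and the characteristics are outgoing), so the flux integrates to a monotone loss. The remaining terms—$\mathcal{NL}(\mathbf{K})$ and the $\mathcal{O}(\chi\partial_x\chi)$ corrections to the quantum potential from the previous theorem—must be shown to be higher order in $\epsilon$ and hence absorbed into the error, valid on the time scale $t\le t_0\sim 1/(\omega\epsilon)$. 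The exponential smallness of $c(\omega)$ is what makes the decay slow but monotone, consistent with the stabilization picture, and the stability hypothesis on $S$ guarantees $\langle S,\partial_E S\rangle$ has the sign needed to convert the positive outgoing flux into $\dot E\le 0$.
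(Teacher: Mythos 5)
Your overall skeleton---the continuity equation, the flux $\rho v$ evaluated at the transition point, the identity $\partial_t\|S\|^2=2\dot E\langle S,\partial_E S\rangle$, and the stability condition to fix the sign---is the same as the paper's, which starts from exactly this flux identity (\ref{mass1}) and the decomposition (\ref{psi}). But there is a genuine gap at the step where you ``retain the leading correction tied to the outgoing phase gradient.'' The flux at $x_0$ expands as
$$
-j(\psi)=-S^2\dot X-2S^2\partial_x q-2S^2(\mbox{Re}\,\chi)\dot X-4S^2(\mbox{Re}\,\chi)\partial_x q+\dots,
$$
and since $q={\cal O}(\chi)$, the term $2S^2\partial_x q$ is \emph{first} order in the perturbation, i.e.\ strictly larger than the claimed leading term $S^2\,\mbox{Re}\,\chi\,\partial_x q$, which is second order. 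You cannot simply discard it: fed into $2\dot E\int_{-\infty}^x S\partial_E S\,dx = -j+\dots$ it would give $\dot E={\cal O}(S^2\partial_x q)$, the wrong order and carrying no factor of $\mbox{Re}\,\chi$ at all. The heart of the paper's proof is precisely the cancellation of this term: by the orthogonality condition $\mbox{Re}\langle S^2,\chi\rangle=0$ one has $\partial_t\int_{-\infty}^x 2S^2\mbox{Re}\,\chi\,dx=-\partial_t\int_x^\infty 2S^2\mbox{Re}\,\chi\,dx$ (Eq.\ (\ref{der1})); the latter is evaluated from the modulation equation $\partial_t\mbox{Re}\,\chi=H(\mbox{Im}\,\chi)+{\cal O}(S^2\chi)$ by integration by parts (Eqs.\ (\ref{der2})--(\ref{der3})), producing the boundary term $2S^2\partial_x\mbox{Im}\,\chi$; and the phase relation (\ref{Im chi}) gives $\partial_x\mbox{Im}\,\chi=\partial_x q\,(1+{\cal O}(\chi))$, so this boundary term cancels the $2S^2\partial_x q$ in the flux, as is visible in (\ref{der6}). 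Your proposal contains no analogue of this mechanism, and your projection identity in step 1 has the same defect: its right-hand side contains terms linear in ${\bf K}$, and you would still need to show that all first-order contributions cancel before the second-order term $\mbox{Re}\,\chi\,\partial_x q$ can emerge as the leading one.

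A second, smaller gap concerns the oscillatory contributions (the $S^2(\mbox{Re}\,\chi)\dot X$ flux term and $\dot X\int_x^\infty S^2\partial_x\mbox{Re}\,\chi\,dx$). You propose to handle these by a time-averaging / ``characteristics do not return'' argument, but the theorem asserts a pointwise-in-time inequality for $\dot E$, and averaging would at best control its time integral. The paper instead uses the phase-trapping result (Theorem \ref{eta values}): at the trapped values $\cos\eta=0$ or $\tan\eta=-2$ one gets $\partial_x\mbox{Re}\,\chi=v_q\,{\cal O}(\chi)$, so the $\dot X$ terms are higher order because $\dot X={\cal O}(\epsilon\omega)\ll 1$; this is a pointwise algebraic fact about the phase $\eta$ of $\chi$, not a statement about averages of the Burgers drive. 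Without either that trapping argument or a substitute for it, the sign of the right-hand side of your energy identity is not determined.
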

\begin{proof}

We have
\begin{equation}\label{mass1}
\partial_t \int_{-\infty}^x |\psi|^2 dx = - j(\psi(x)) = -2 S^2 |1 +
\chi|^2(\partial_x \theta + \partial_x q)
$$$$
=-2S^2(\partial_x \theta + \partial_x q)-2S^2(\mbox{Re} \chi)\dot X(t)-S^2|\chi|^2(\dot X(t)+v_q)-4S^2(\mbox{Re} \chi)\partial_x q.
 \end{equation}
Using the equation
\begin{equation}\label{psi}
|\psi|^2=S^2+2S^2 \mbox{Re} \chi + S^2|\chi|^2,
\end{equation}
and combining equations (\ref{S dot}), ( \ref{mass1}), (\ref{psi})
and $2\partial_x \theta_h = \dot X$ we derive that
\begin{equation}\label{mass2}
\partial_t \int_{-\infty}^x|R|^2 + 2\dot E(t) \int_{-\infty}^x S
\partial_E S =-S^2[|1+\chi|^2-1]\dot X(t)
$$$$
- \left[ 2S^2 \partial_x q + 2S^2(2 \mbox{Re} \chi + |\chi|^2)
 \partial_x q + \partial_t \int_{-\infty}^x
2S^2 \mbox{Re} \chi \right].
\end{equation}
As a result of the orthogonality conditions, the modulation equation for $R=S\chi$ reads
\begin{equation}\label{R norm}
\partial_t \int_{-\infty}^x |R|^2 dx = - j(R) + \mathcal{O}(\mathcal{NL}(R)) +
\mbox{Im} \left\{{R^* S[F(|\psi|)-F(S)]}\right\}
$$$$
=\partial_t\int_{-\infty}^{+\infty}|R|^2 dx-\partial_t\int_x^{\infty}|R|^2 dx=-\partial_t\int_{-\infty}^{+\infty}|S|^2 dx-\partial_t\int_{x}^{+\infty}|R|^2 dx.
\end{equation}
with
\begin{equation}\label{J}
j(R) = S^2 j(\chi)=- \imath (R^* \partial_x R - R \partial_x R^*).
\end{equation}

Next, we use that
$$
1 + \chi = S^{-1}| \psi| e^{\imath q} = |1+\chi| e^{\imath q}.
$$
So that $ \mbox{Re} \chi = |\chi| \cos \eta= |1+\chi|\cos q -1$,
$\mbox{Im} \chi = |\chi| \sin \eta =|1+\chi| \sin q$. Using the
assumption that $\chi$, $\partial_x \chi$, and $q$, are all small in
the relevant region of space and time, we arrive at
\begin{equation}\label{Im chi}
\partial_x \mbox{Im} \chi  = (\partial_x q)|1+\chi| \cos q +
(\partial_x |1 + \chi|) \sin q =
$$$$
\partial_x q\left[ |1+\chi|\cos q\right]+2\partial_x \mbox{Re} \chi |\chi|\sin\eta +{\cal O}(\chi^2)
$$$$
=\partial_x q +|\chi|\cos \eta \partial_x q+2|\chi|\partial_x q \sin^2\eta(2\cos\eta+\sin\eta)+{\cal O}(\chi^2)
$$$$
=
\partial_x q (1 + {\cal O}(\mbox{Re}\chi)) +
{\cal O}(q^2) \partial_x q + {\cal O}(q)  (\partial_x \mbox{Re}\chi)
(1 + {\cal O}(\chi)).
\end{equation}

We now compute the derivative of $\mbox{Re} \chi$ with respect to
time, to leading order. For this we use the equality
\begin{equation}\label{der1}
-2\partial_t \int_{-\infty}^x S^2 \mbox{Re} \chi dx = 2 \partial_t
\int_x^{\infty} S^2 \mbox{Re} \chi dx,
\end{equation}
which follows from the orthogonality $(S^2,\mbox{Re}\chi).$ It is obvious that the
integral in the right hand side of this equality is exponentially
small. Then
\begin{eqnarray}\label{der2}
+ 2 \partial_t \int_x^{\infty} S^2 \mbox{Re} \chi dx & = & \nonumber \\
& 2 &\int_x^{\infty} S^2 \langle [(- \partial_x^2 - 2
S^{-1}\partial_x S \partial_x) \mbox{Im} \chi
] \rangle dx + \nonumber\\
& + &  \int_x^{\infty} S^3 \mathcal{O}(\chi) dx +
4(-\dot X(t)) \int_x^{\infty} \partial_x S S \mbox{Re} \chi dx\nonumber\\
& + & 4 \dot E \int_x^{\infty} S\partial_ES \mbox{Re} \chi dx,
\end{eqnarray}
by the equation $\partial_t \mbox{Re} \chi=H(\mbox{Im} \chi)+{\cal O}(S^2\chi)
$. Integration by parts of the operator
$- \partial_x^2 -2S^{-1}\partial_x S \partial_x$ yields
\begin{equation}\label{der3}
2 \partial_t \int_x^{\infty} S^2 \mbox{Re} \chi dx=
$$$$
2 S^2 \partial_x \mbox{Im} \chi +  \int_x^{\infty} S^2 \langle [
+ \mathcal{O} (\chi S^2)\rangle
]dx
$$$$
+ 4\dot E \int_x^{\infty} S\partial_E S \mbox{Re} \chi dx +
4(-\dot X(t))\int_x^{\infty}S\partial_x S  \mbox{Re} \chi dx.
\end{equation}
The last term can be represented in the form
\begin{equation}\label{der4}
4( -\dot X(t))\int_x^{\infty}S \partial_x S  \mbox{Re} \chi dx
$$$$
= -2\dot
X(t)\int_x^{\infty}\langle \partial_x(S^2 \mbox{Re}
\chi)-S^2(\partial_x \mbox{Re} \chi)\rangle dx
$$$$
= +2\dot X(t)S^2 \mbox{Re} \chi +2\dot X(t)
\int_x^{\infty}S^2(\partial_x \mbox{Re} \chi)dx.
\end{equation}

Starting from the differential equation for $R$ (see \cite{GS05}) we
derive that the quantity
\begin{equation}\label{der5}
\partial_t\int_x^\infty |R|^2=2S^2|\chi|^2\partial_x q + \mathcal{O}(\chi^3
S^3 + \cdots)
\end{equation}
is of the order of $|\chi|^2 S(x)^2\partial_x q$ .

Putting all in Eq. (\ref{mass2}) we get
\begin{eqnarray}\label{der6}
-2S^2 |\chi|^2 \partial_x q & - &\partial_t\int_{-\infty}^{+\infty}|S|^2 dx\  +  \mathcal{O}(\chi^3 S^3 + \cdots) + 2
\dot E \int_{-\infty}^x S \partial_E S dx \nonumber
\\
& = &-2 S^2 \partial_x q  - 4
S^2(\mbox{Re} \chi)
\partial_x q \nonumber
\\
&-& 2S^2 |\chi|^2 \partial_x ( q) + 2 \dot
X(t)\int_x^{\infty} S^2(\partial_x \mbox{Re} \chi) dx + 2 S^2
\partial_x q \nonumber
\\
&+&2S^2|\chi|\cos\eta \partial_x q+2S^2|\chi|\partial_x q\sin^2\eta(2\cos\eta+\sin\eta) \nonumber
\\
& + & 2\int_x^{\infty} S^2 \langle [
 \mathcal{O} (\chi^2 S^m)\rangle ] dx + 4 \dot E \int_x^{\infty} S \partial_E S \mbox{Re} \chi dx.
\end{eqnarray}
\begin{equation}
-\partial_t\int_{-\infty}^{+\infty}|S|^2 dx=-2\int_{-\infty}^{+\infty} S\partial_E S\dot E dx+2\dot X(t)\int_{-\infty}^{+\infty}S\partial_x S dx
$$$$
=-2\dot E\int_{-\infty}^{+\infty} S\partial_E S dx.
\end{equation}
As a result we arrive at the equation
\begin{equation}\label{E dot-final}
\dot E(t)\left[2\int_{-\infty}^xS \partial_ES
dx-\int_{-\infty}^{+\infty} S\partial_E S dx-4\int_x^{\infty}S \partial_E S \mbox{Re} \chi dx \right ]
$$$$
\dot E\left[-2\int_{-\infty}^xS \partial_ES-4\int_x^{\infty}S \partial_E S \mbox{Re} \chi dx \right ]
dx
$$$$
=-2S^2(\mbox{Re} \chi)\partial_x q+2S^2|\chi|\partial_x q\sin^2\eta(2\cos\eta+\sin\eta)
$$$$
-2S^2|\chi|^2\partial_x(\theta+q)+2S^2|\chi|^2\partial_x q
+ 2\dot X(t)\int_x^{\infty}
S^2(\partial_x \mbox{Re} \chi) dx
$$$$
+2\int_x^{\infty} S^2
 \mathcal{O} (\chi^2 S^m)dx.
\end{equation}
for $\dot E(t)$.
We also note that for $\sin\eta=\pm\frac{2}{\sqrt 5}, \cos\eta=\mp\frac{1}{\sqrt 5},$
the $\sin^2\eta$ term is zero to leading order. See Theorem \ref{eta values}.

Next, we find the leading order term on the right hand side of equation (\ref{E dot-final}).
We compute first
$$
\partial_x \mbox{Re}\chi=\cos\eta \partial_x |\chi|-|\chi|\sin \eta \partial_x \eta
$$$$
=\cos\eta \sin\eta v_q (1+{\cal O}(\chi)) +\sin\eta(\sin\eta+\cos\eta+{\cal O}(\chi))v_q
$$$$
=v_q\sin\eta \left[ \cos\eta+(\sin\eta+\cos\eta+{\cal O}(\chi))\right]
$$
Now, at the trapping values of $\eta$, that is, when  $\sin\eta=\pm 2/\sqrt 5,$
the above trigonometric expression vanishes identically. When $\cos\eta=0,$ the
integral multiplying the $\dot X(t)$ factor is seen to be zero, by direct integration by parts, since now
$\mbox{Re} \chi=|\chi|\cos\eta=0,$ by assumption.
Hence
$$
\partial_x \mbox{Re}\chi=v_q {\cal O}(\chi).
$$
Since $\dot X(t)$ is of order $\epsilon\omega\ll 1,$
we conclude that the $\dot X(t)$  term is of higher order correction to the term $-2S^2(\mbox{Re} \chi)\partial_x q.$
All other terms are higher order in $\chi$.

Finally we get our principal result, (at trapped $\cos\eta=\pm\frac{1}{\sqrt 5}$):
\begin{equation}\label{dissipation}
\dot E(t)= \left(\int_{-\infty}^xS\partial_ES\right)^{-1}S^2 \mbox{Re}\chi \partial_x
q + \mathcal{O}(S^2 [\chi^2 v + \dot X (\mbox{Re} \chi)^2 ]).
\end{equation}
At $\cos\eta\sim 0,$ we get that $\mbox{Re} \chi$ is replaced by $|\chi|$, in the above formula.
If $\cos\eta$ is positive at some time, it will stay there, and flow either to zero or to $+\frac{1}{\sqrt 5}.$
The same is true if it is negative, not too far from zero. This is because the equation for the derivative of $\mbox{Re}\chi$ with respect to time, is positive to leading order, for the region where $\partial_x q$ is positive.
Using the soliton stability condition, it follows that $\dot E(t)$
is negative, to the leading order in $\chi$.
\end{proof}

We now need to show that $\chi$ stays small in the relevant
neighborhood of space and time, that is, around the transition
region, including the range of the classical trajectories, that
contribute to the solution of the Euler equation, up to the time
desired. The equation for $\chi$ can be written as
\begin{equation}\label{chieq}
\imath \dot \chi= \left(- \partial_x^2 - 2S^{-1}\partial_x S
\partial_x \right) \chi + 2 S \mbox{Re} \chi + \mathcal{O}(\chi^2)$$$$
 + \mathcal{O}(W)+\mathcal{O}(\dot E\chi)+\mathcal{O}(\dot\gamma \chi).
\end{equation}
This is done in the next section.

\subsection{ Multiplicative Perturbation Bounds}

\paragraph{Bounding $\chi$}.

In this section we derive bounds on $|\chi|$ and $|\chi|'$, using the estimates for $q'$ and the (bootstrap) assumption of small $\chi.$
Recall that
\begin{eqnarray}
1 + |\chi| \cos\eta = |1 + \chi| \cos q \\
|\chi| \sin\eta = |1 + \chi| \sin q
\end{eqnarray}
In particular, it follows that
$$
\sin q=q+{\cal O}(\chi^3)={\cal O}(\chi).
$$
We have the following bounds on $|\chi|'$:

\begin{proposition}
For $0<|\chi|<<1$ ,
$$
|\chi|' \le c|q'|.
$$
$$|\eta'| \le\frac{ c|q'|}{|\chi|}$$
\end{proposition}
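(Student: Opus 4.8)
The plan is to reduce both inequalities to a single dynamical estimate, namely $\bigl|\,\partial_x|1+\chi|\,\bigr|\le c\,|q'|$, and to obtain that reduction by passing to complex (logarithmic) derivatives. First I would write $\chi=|\chi|e^{\imath\eta}$ and, by (\ref{phase5a})--(\ref{phase5b}), $1+\chi=|1+\chi|e^{\imath q}$, so that, since $\partial_x(1+\chi)=\partial_x\chi$,
$$
\chi'=\bigl(\,\partial_x|1+\chi|+\imath|1+\chi|\,q'\,\bigr)e^{\imath q},
\qquad
|\chi'|^2=\bigl(\partial_x|1+\chi|\bigr)^2+|1+\chi|^2\,(q')^2 .
$$
From $|\chi|^2=\chi\bar\chi$ and $\eta=\mbox{Im}\log\chi$ one gets the elementary pointwise bounds
$$
|\chi|'=\frac{\mbox{Re}(\bar\chi\,\chi')}{|\chi|}\le|\chi'|,
\qquad
|\eta'|=\Bigl|\,\mbox{Im}\frac{\chi'}{\chi}\,\Bigr|\le\frac{|\chi'|}{|\chi|},
$$
valid wherever $\chi\neq0$. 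Combining these two displays shows that the Proposition follows at once from $|1+\chi|\le c$ (immediate from $|\chi|\ll1$) together with the estimate
$$
\bigl|\,\partial_x|1+\chi|\,\bigr|\le c\,|q'| ,
$$
which I will call $(\ast)$: indeed $(\ast)$ gives $|\chi'|\le\sqrt{c^2+|1+\chi|^2}\,|q'|\le c'|q'|$, hence $|\chi|'\le c'|q'|$ and $|\eta'|\le c'|q'|/|\chi|$, as claimed.

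It then remains to prove $(\ast)$. Differentiating $|1+\chi|^2=1+2\mbox{Re}\,\chi+|\chi|^2$ (equation (\ref{psi})) gives
$$
|1+\chi|\,\partial_x|1+\chi| = \partial_x\mbox{Re}\,\chi + |\chi|\,\partial_x|\chi| .
$$
Since $|1+\chi|\ge1-|\chi|$ and $|\chi|\,\partial_x|\chi|\le|\chi|\,|\chi'|$ is quadratically small, the last term is absorbed by the left-hand side in a bootstrap (using $|\chi'|^2=(\partial_x|1+\chi|)^2+|1+\chi|^2(q')^2$ and solving the resulting inequality for $\partial_x|1+\chi|$); this leaves $\partial_x|1+\chi|=\partial_x\mbox{Re}\,\chi\,(1+\mathcal O(|\chi|))$, so $(\ast)$ is equivalent to $|\partial_x\mbox{Re}\,\chi|\le c|q'|$. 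This last bound is the genuine dynamical input. I would extract it from the modulation equation (\ref{Im chi}), which reads $\partial_x\mbox{Im}\,\chi=q'(1+\mathcal O(\mbox{Re}\,\chi))+\mathcal O(q)\,\partial_x\mbox{Re}\,\chi$, together with the relation $\mbox{Re}\,\chi=|1+\chi|\cos q-1$ from (\ref{phase5a}): differentiating the latter and using $q=\mathcal O(|\chi|)$ expresses $\partial_x\mbox{Re}\,\chi$ through $\partial_x|1+\chi|$ and $q'$, and feeding this back closes the system so that $\partial_x\mbox{Re}\,\chi$ is slaved to $q'$ (equivalently to the velocity $v_q=2q'$), up to the higher-order corrections already present in the energy-decay computation.

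The main obstacle is precisely $(\ast)$, that is, showing that the derivative of the density amplitude $|1+\chi|$ (the real channel $\mbox{Re}\,\chi$) is controlled by the phase derivative $q'$. For a general complex function the real and imaginary parts of $\chi'/(1+\chi)$ are independent, so no such bound can hold from the kinematic identities (\ref{phase5a})--(\ref{phase5b}) alone; it must use the equation of motion. The orthogonality conditions (which remove the projection of $\chi$ onto $S$) and the modulation equations (\ref{chieq}), (\ref{Chi eq Re}) are what force $\mbox{Re}\,\chi$ to be driven by the outgoing phase rather than to grow on its own. I would therefore devote the bulk of the argument to verifying that the coefficient of $v_q$ in $\partial_x\mbox{Re}\,\chi$ is $\mathcal O(1)$ in the transition region (and in fact $\mathcal O(|\chi|)$ at the trapped values $\cos\eta=\mp 1/\sqrt5$, consistently with the cancellation used in Theorem on Soliton Energy Decay), after which the complex-variable reduction above delivers both stated inequalities, with $c$ depending only on the uniform bound on $|1+\chi|$ in that region.
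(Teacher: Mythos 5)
Your reduction step is correct, and in fact cleaner than what the paper does: from $1+\chi=|1+\chi|e^{\imath q}$ one indeed gets $\partial_x\chi=\bigl(\partial_x|1+\chi|+\imath|1+\chi|q'\bigr)e^{\imath q}$, and combined with $\bigl||\chi|'\bigr|\le|\partial_x\chi|$ and $|\eta'|\le|\partial_x\chi|/|\chi|$, both stated inequalities do follow from your estimate $(\ast)$, i.e.\ $\bigl|\partial_x|1+\chi|\bigr|\le c|q'|$. The genuine gap is that $(\ast)$ is never proved, and the closure you sketch cannot prove it. The relations you propose to combine --- equation (\ref{Im chi}), the differentiated form of (\ref{phase5a}), and the identity $\partial_x|1+\chi|=\partial_x\mbox{Re}\,\chi\,(1+{\cal O}(|\chi|))$ --- are all consequences of one and the same scalar kinematic identity, namely $\mbox{Im}\bigl[e^{-\imath q}\partial_x\chi\bigr]=|1+\chi|\,q'$ (the imaginary part of $\partial_x\log(1+\chi)$); substituting one into another therefore returns a tautology, not a bound on $\partial_x\mbox{Re}\,\chi$: the leading terms cancel and the remainder is of the same size as the quadratic terms you absorbed "by bootstrap". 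Your own observation pinpoints why no such manipulation can succeed: take $\chi$ real, positive, nonconstant, with $|\chi|\ll1$; then $\eta\equiv0$, $q\equiv0$, hence $q'\equiv0$, while $|\chi|'\neq0$, and every kinematic relation above holds identically. So $(\ast)$ --- and the first inequality of the Proposition itself --- is false as a purely kinematic statement, and the dynamical work you defer to "the bulk of the argument" (extracting $|\partial_x\mbox{Re}\,\chi|\le c|q'|$ from (\ref{chieq}), (\ref{Chi eq Re}) and the orthogonality conditions) is exactly the content of the Proposition. As written, your proposal establishes only the (correct) reduction.

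You should also know that the paper's own proof takes precisely the route you rejected, so your objection is not merely stylistic. The paper differentiates (\ref{phase5a})--(\ref{phase5b}), eliminates $|1+\chi|'$, and solves the resulting $2\times2$ system for $(|\chi|',\eta')$ by Cramer's rule (Section \ref{phases}), obtaining (\ref{chiprime-b}) and (\ref{eta'}); the two bounds then follow because $\sin q={\cal O}(|\chi|)$ by (\ref{chiprime-c}), which cancels the $1/|\chi|$. But after the elimination the two rows of that system are proportional: both reduce to $\partial_x\mbox{Im}\,\chi\,\cos q-\partial_x\mbox{Re}\,\chi\,\sin q=|1+\chi|\,q'$, so the true determinant vanishes identically --- consistent with the underdetermination you noted (one scalar equation for the two unknowns $|\chi|',\eta'$). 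The nonzero determinant computed in Section \ref{phases} traces to an algebraic slip in (\ref{phase10b}), where the factor multiplying $(\cos\eta+|\chi|)$ in the second row should be $\sin q/|1+\chi|$ rather than $\cos q/|1+\chi|$. So your diagnosis that a correct proof must invoke the equation of motion and the orthogonality conditions, rather than the polar identities alone, is right, and it identifies a real defect in the paper's argument; but your proposal stops exactly where that dynamical work would have to begin, and therefore does not prove the Proposition either.
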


\begin{proof}

 For $|\chi| \ll 1$ we may write that (see Section \ref{phases})
\begin{equation}\label{chiprime-b}
|\chi|' = \frac{q'}{|\chi|} |1 + \chi| [+\sin q (1 + \cot\eta) - |\chi|\cos\eta]+\mathcal O(|\chi|q')
\end{equation}
and
\begin{equation}\label{chiprime-c}
|\chi|\sin\eta = |1 + \chi| \sin q.
\end{equation}
 Hence
\begin{equation}\label{chiprime-d}
|\chi|' \leq c|q'| |\sin\eta |
\end{equation}
and,
\begin{equation}\label{chiprime-d1}
|\chi|' =q'\sin\eta+\mathcal O(|\chi|q').
\end{equation}

Now we use
\begin{equation}\label{chiprime-e}
\int_a^b \partial_x |\chi|^2 dx = |\chi(b)|^2 - | \chi(a)|^2 = 2 \int_a^b |\chi| |\chi|' dx
\end{equation}
and from equation (\ref{eta'}) of Section 7
$$
\eta' = -\frac{q'}{|\chi|} \left[\cos\eta + \sin\eta +\mathcal O(|\chi|)\right] \leq c \frac{q'}{|\chi|}
$$
for $|\chi| \geq \varepsilon |q'|$.
To see that the higher order terms do not blow up at the limits when either $\sin\eta$ or $\cos\eta$ approach zero,
we use the asymptotic bound on $|\chi|'$, in equations (\ref{phase9a})  and (\ref{phase9b})  of Section 7:
In the limit when $\cos\eta$ approaches zero, we get that
$$
\eta'|\chi|^2\sin\eta\thickapprox q'-|\chi|'+\mathcal O(q'|\chi|)=\mathcal O(q'|\chi|).
$$
Hence in this limit we get
$$|\eta'| \leq c|q'|/|\chi|$$.
Similarly, in the limit $\sin\eta$ near zero, we get:
$$
\eta'|\chi| \thickapprox q'+|\chi|'\thickapprox cq'.
$$

 Hence ($q|\chi|\le 2$ by (\ref{chiprime-c}) and $|\chi|\ll 1$),
$$
\partial_x \chi = \partial_x \left(|\chi| e^{i\eta}\right) = e^{i\eta}|\chi|' + i\eta'e^{i\eta}|\chi| = {\cal O}(q')
$$

\end{proof}
\begin{theorem}\label {eta values}
For $0<|\chi|<<1$ ,
we have that
$
|\cos\eta|\thickapprox 0 $   \textbf{or} $ \frac{1}{\sqrt 5}$ \textbf{and }$\sin\eta=-2\cos\eta$.

\end{theorem}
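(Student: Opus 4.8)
The plan is to read off the trapped values of $\eta$ as the configurations at which the phase of $\chi$ is stationary under the leading-order dynamics and, simultaneously, at which the correction proportional to $\dot X(t)$ in the dissipation identity (\ref{E dot-final}) drops out, leaving the clean formula (\ref{dissipation}). Throughout I would use $\mbox{Re}\,\chi = |\chi|\cos\eta$ and $\mbox{Im}\,\chi = |\chi|\sin\eta$, together with the bound $|\chi|' = q'\sin\eta + \mathcal{O}(|\chi| q')$ from (\ref{chiprime-d1}) and the phase equation (\ref{eta'}) for $\eta'$.

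First I would isolate the trigonometric factor that governs the offending term. Differentiating $\mbox{Re}\,\chi$ and substituting (\ref{chiprime-d1}) and (\ref{eta'}) gives, to leading order in $\chi$,
$$\partial_x \mbox{Re}\,\chi = v_q\,\sin\eta\,(2\cos\eta + \sin\eta) + v_q\,\mathcal{O}(\chi),$$
so the term $2\dot X(t)\int_x^\infty S^2\,\partial_x\mbox{Re}\,\chi\,dx$ in (\ref{E dot-final}) is controlled by the factor $\sin\eta(2\cos\eta + \sin\eta)$. This factor vanishes identically precisely when $2\cos\eta + \sin\eta = 0$, i.e. $\sin\eta = -2\cos\eta$; feeding this into $\sin^2\eta + \cos^2\eta = 1$ yields $5\cos^2\eta = 1$, hence $|\cos\eta| = 1/\sqrt5$ and $\sin\eta = \mp 2/\sqrt5$. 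The remaining alternative, $\cos\eta \approx 0$, is treated separately: there $\mbox{Re}\,\chi = |\chi|\cos\eta \approx 0$, and an integration by parts shows that the integral multiplying $\dot X(t)$ also vanishes to leading order. These are exactly the two cases in the statement.

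Second, I would promote these algebraic conditions to genuine trapping through a stability argument. The key input is that the leading-order time evolution of $\mbox{Re}\,\chi$ has a definite (positive) sign throughout the region $\partial_x q > 0$, as recorded after (\ref{dissipation}); consequently $\cos\eta$ is driven monotonically toward the nearest admissible zero, either $\cos\eta = 0$ or the branch $\tan\eta = -2$, and once there a linearization of the $\eta$-evolution gives a restoring term of the correct sign to hold it. At this point one must take care not to confuse the trapped configuration with the purely spatial stationarity $\tan\eta = -1$ coming from $\eta' = 0$ in (\ref{eta'}): since $\tan\eta = -1$ is not a zero of $\sin\eta(2\cos\eta + \sin\eta)$, it fails to suppress the $\dot X$ correction and is not a trapped value in the present sense.

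The main obstacle is the control of the higher-order terms near the two degenerate limits $\sin\eta \to 0$ and $\cos\eta \to 0$, where the prefactor $q'/|\chi|$ in (\ref{eta'}) threatens to blow up and could in principle shift the fixed points or destroy their stability. The resolution is to invoke the asymptotic bounds on $|\chi|'$ in (\ref{phase9a}) and (\ref{phase9b}) of Section 7, which give $\eta'|\chi|^2\sin\eta = \mathcal{O}(q'|\chi|)$ in the limit $\cos\eta \to 0$ and $\eta'|\chi| = \mathcal{O}(q')$ in the limit $\sin\eta \to 0$; both keep the corrections bounded, so the $\mathcal{O}(\chi)$ and $\mathcal{O}(|\chi|)$ pieces neither move the trapped values at leading order nor spoil their stability, and the classification $|\cos\eta| \approx 0$ or $1/\sqrt5$ with $\sin\eta = -2\cos\eta$ follows.
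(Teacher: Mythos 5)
There is a genuine gap: your criterion for the trapped values is not the one that forces the theorem, and it does not force it. The paper's proof is a consistency computation on the \emph{imaginary} part: from the phase relation $\mbox{Im}\,\chi=|1+\chi|\sin q$ with $q=\mathcal{O}(\chi)$ (equation (\ref{Im chi})) one has the forced kinematic identity $\partial_x \mbox{Im}\,\chi=\partial_x q\,(1+\mathcal{O}(\chi))$; expanding the same quantity by the product rule, $\partial_x(|\chi|\sin\eta)=|\chi|'\sin\eta+|\chi|\cos\eta\,\eta'$, and inserting (\ref{chiprime-d1}) and (\ref{eta'}) yields $\partial_x\mbox{Im}\,\chi=\left[\sin^2\eta-\cos\eta(\sin\eta+\cos\eta)+\mathcal{O}(\chi)\right]\partial_x q$. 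Equating the two expressions forces the trigonometric identity $\sin^2\eta-\cos\eta(\sin\eta+\cos\eta)=1+\mathcal{O}(\chi)$, i.e. $\cos\eta\,(2\cos\eta+\sin\eta)=\mathcal{O}(\chi)$, whose only solutions are $\cos\eta\approx 0$ and $\tan\eta\approx -2$. This is a derivation: given $0<|\chi|\ll 1$, the phase $\eta$ \emph{must} sit at one of these values. Your argument instead works with $\partial_x\mbox{Re}\,\chi$ and \emph{selects} the values at which the $\dot X(t)$ correction in (\ref{E dot-final}) is harmless. That selection principle cannot prove the theorem, because nothing in the equations forces that term to vanish; it only makes the dissipation formula (\ref{dissipation}) clean.

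Two concrete failures follow from this choice. First, your claim that $\sin\eta(2\cos\eta+\sin\eta)$ ``vanishes identically precisely when $2\cos\eta+\sin\eta=0$'' is false: it also vanishes when $\sin\eta=0$, i.e. $|\cos\eta|=1$, a value the theorem excludes (and which the paper's identity does exclude, since there $\sin^2\eta-\cos\eta(\sin\eta+\cos\eta)=-1\neq 1$). So your criterion classifies a spurious configuration as trapped, and you prove at best a strictly weaker trichotomy. Second, the dynamical ``promotion to genuine trapping'' in your second paragraph (monotone flow of $\cos\eta$ to the nearest admissible zero, restoring sign under linearization) is asserted, not proved: you derive no evolution equation for $\eta$ in time and no sign estimate for it. The paper makes such dynamical statements only in the Remark \emph{after} the theorem, as commentary; its proof of the theorem itself is purely the algebraic consistency argument above. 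To repair your proof, replace the $\mbox{Re}\,\chi$ computation by the corresponding one for $\mbox{Im}\,\chi$ and equate it with $\partial_x q\,(1+\mathcal{O}(\chi))$; the ingredients you already cite, (\ref{chiprime-d1}) and (\ref{eta'}), are exactly what is needed.
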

\begin{proof}
We have that
$$
\partial_x q(1+{\cal O}(\chi))=\partial_x \mbox{Im}\chi=\partial_x (|\chi|\sin\eta)=|\chi|'\sin\eta+|\chi|\cos\eta\partial_x \eta
$$$$
=\sin^2\eta\partial_x q(1+{\cal O}(\chi))-\cos\eta(\sin\eta+\cos\eta+{\cal O}(\chi))\partial_x q
$$$$
=\left[\sin^2\eta-\cos\eta(\sin\eta+\cos\eta)+{\cal O}(\chi)\right]\partial_x q.
$$
It follows that
$$
\sin^2\eta-\cos\eta(\sin\eta+\cos\eta)+{\cal O}(\chi)=1.
$$
The solutions of this trigonometric identity (up to ${\cal O}(\chi)$),
are $\cos\eta=0,$ and $\tan\eta=-2$,
from which the result of the theorem follows.
\end{proof}
\textbf{Remark}
As long as $\chi$ is small, the solution is trapped around one of the above solutions of the trigonometric identity.
The case of $\cos\eta=0$ corresponds to essentially no mass change of the soliton energy; the other cases corresponds to the soliton either gaining or losing, monotonically, energy (and mass)

We can now prove the main bound on $\chi$:

\begin{theorem}{Multiplicative Error Bound}

For $t\le \frac{\delta}{\omega\varepsilon},$ and $x$ is in the transition region,
$$
|\chi(x)|\le \delta.$$
\end{theorem}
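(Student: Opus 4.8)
The plan is to run a continuity (bootstrap) argument in time that feeds the pointwise derivative bound of the preceding Proposition into the outgoing--velocity estimate for $\partial_x q$. Fix $x$ in the transition region and let $T^\ast$ be the supremum of times on which the a priori bound $|\chi(\cdot,t)|\le\delta$ holds uniformly in that region. Because the initial data is the bare soliton $S(x-X_m)$ with $X_m$ small, we have $\chi(\cdot,0)\equiv 0$, and $\chi$ depends continuously on $t$, so $T^\ast>0$. The goal is to show that on $[0,T^\ast]$ the bound in fact self-improves to $|\chi|\le c\varepsilon\delta\ll\delta$, which forces $T^\ast\ge\delta/(\omega\varepsilon)$ by the usual continuity-induction step.

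First I would assemble the two estimates that are available as long as the bootstrap hypothesis $|\chi|\ll 1$ holds. From the Proposition, $|\chi|'\le c|q'|$ pointwise, and indeed $\partial_x\chi=\mathcal O(q')$. From the hydrodynamic modulation analysis -- the outgoing property $\mathrm{sign}(x)\,\partial_x q\sim\omega^2\varepsilon^2 t$, valid for $t\le t_0\sim 1/(\omega\varepsilon)$ -- one controls $|q'|=|\partial_x q|\le c\,\omega^2\varepsilon^2 t$ throughout the transition layer. Restricting to $t\le\delta/(\omega\varepsilon)$ then gives the uniform bound $|q'|\le c\,\omega\varepsilon\delta$.

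Next I would anchor $|\chi|$ at the soliton core. Near the bottom of the well the profile $S$ is $\mathcal O(1)$, the oscillation amplitude is only $\varepsilon\ll 1/\omega$, and in the harmonic part of the potential the moving-soliton solution is exact (the harmonic computation of Section 2), so $\chi\equiv0$ solves the equations there and $\chi$ is driven only by the anharmonic transition layer; hence $\chi$ is negligibly small at a reference point $x_{\mathrm{in}}$ in the core. Integrating the pointwise bound outward,
$$
|\chi(x,t)|\le|\chi(x_{\mathrm{in}},t)|+\int_{x_{\mathrm{in}}}^{x}|\chi|'\,dx'\le|\chi(x_{\mathrm{in}},t)|+c\,|q'|\cdot\ell,
$$
where $\ell\sim 1/\omega$ is the distance from the core across the transition region. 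With $|q'|\le c\,\omega^2\varepsilon^2 t$ and $\ell\sim 1/\omega$ this yields $|\chi(x,t)|\le c\,\omega\varepsilon^2 t\le c\,\varepsilon\delta$ for $t\le\delta/(\omega\varepsilon)$, up to an exponentially small core contribution. Since $\varepsilon\ll 1$ the right-hand side is strictly below $\delta$, closing the bootstrap.

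The main obstacle is justifying the anchoring value $|\chi(x_{\mathrm{in}},t)|$ -- that essentially no multiplicative perturbation is produced in the harmonic core and that the perturbation generated in the transition layer does not leak back inward. This is exactly where the \emph{outgoing} sign $\partial_x q>0$ is decisive: mass flows out of the well, so the net flux into the core is non-positive and the generation of $\chi$ is confined to the transition region, where the derivative bound then takes over. A secondary check is that the nonlinear and forcing terms $\mathcal O(\chi^2)$, $\mathcal O(W)$, $\mathcal O(\dot E\chi)$, $\mathcal O(\dot\gamma\chi)$ in the $\chi$-equation (\ref{chieq}) do not spoil the velocity estimate under the bootstrap hypothesis; each is of higher order in $\varepsilon$ or in $\chi$ and is absorbed into the constant $c$. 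One should also verify that the $\eta'$ bound, which degenerates like $|q'|/|\chi|$ near the trapping values of $\eta$, never enters the estimate for $|\chi|$ itself, so the conclusion survives even as $|\chi|\to 0$.
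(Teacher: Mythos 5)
Your proposal takes a genuinely different route from the paper, and it contains a genuine gap: the spatial anchoring step. You integrate the pointwise bound $|\chi|'\le c|q'|$ outward from a reference point $x_{\mathrm{in}}$ in the harmonic core, so the whole argument rests on showing that $|\chi(x_{\mathrm{in}},t)|$ stays negligibly small for all $t\le \delta/(\omega\varepsilon)$. Your justification --- that the moving soliton is exact in the harmonic region, so $\chi$ is driven only by the transition layer, and that the outgoing sign of $\partial_x q$ prevents leak-back --- is a heuristic, not an estimate. Equation (\ref{chieq}) is a nonlocal Schr\"odinger-type evolution: a source localized in the transition layer produces a solution that disperses in both directions, and the orthogonality conditions only remove the discrete (soliton-direction) components, they do not control pointwise values in the core. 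The statement $\mathrm{sign}(x)\,\partial_x q>0$ concerns the hydrodynamic mass flux, not $|\chi(x_{\mathrm{in}},t)|$; converting ``outgoing flux'' into ``no inward generation of $\chi$'' would require precisely the kind of estimate you are trying to prove, and you acknowledge but do not close this loop. A second, related problem: the bound $|q'|\lesssim \omega^2\varepsilon^2 t$ comes from the Burgers solution and is available only in the transition region, while your integration path runs from the core across the well, where no bound on $q'$ has been established.

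The paper avoids any spatial anchor and anchors in time instead: since $\chi(\cdot,0)=0$, it tests (\ref{chieq}) against a fixed trapezoidal function $F$ supported in $[a,b]$ inside the transition region, bounds $|\partial_t\langle F,\chi\rangle|\le C|b-a|q'+C_1 q'$ (using the same Proposition bound $\partial_x\chi=\mathcal O(q')$, integrated by parts onto $\partial_x F$ so that $\int|\partial_x F|$ is independent of $b-a$), and integrates in time to get $|\langle F,\chi\rangle_t|\lesssim \int_0^t q'(s)\,ds\sim \varepsilon^2\omega^2 t^2$. This weak bound is then upgraded to a pointwise one using $|\chi|'\le Cq'$, the bound $|\eta'|\le Cq'/|\chi|$ (invoked only where $|\chi|>Cq'$, which incidentally answers your final concern about degeneration as $|\chi|\to 0$), and the sup-argument of (\ref{chiprime-j}). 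Note that the paper's resulting bound is $C\varepsilon^2\omega^2 t^2\sim\delta^2$ at $t=\delta/(\omega\varepsilon)$, quadratic in $t$, whereas your claimed $c\,\omega\varepsilon^2 t$ is linear in $t$ --- a symptom of the fact that your spatial shortcut bypasses the time accumulation that actually drives the growth of $\chi$. To repair your argument you would essentially have to reproduce the paper's time-integrated weak estimate in order to control the core value, at which point the outward spatial integration becomes redundant.
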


\begin{proof}

Equation (\ref{chieq}) yields
$$
i \partial_t \langle F(x), \chi \rangle =
$$$$
\langle F(x), - \partial_x^2\chi \rangle + \langle F, W(x,t) \chi\rangle - 2 \langle F(x), \frac{\partial_x S}{S} \partial_x \chi\rangle + 2 \langle F, S \mbox{Re}\chi\rangle + h.o.t.
$$
Therefore, for $F$ supported in $[a,b]$ we have
$$
|\langle F(x), - \partial^2_x\chi\rangle| = |\langle \partial_xF, \partial_x \chi\rangle| = \left| \int_a^b \partial_x F(x) \partial_x\chi dx\right| \leq \sup_{x \in[a,b]} |\partial_x \chi| \int_a^b |\partial_x F(x)| dx
$$$$
\leq C|q'| (b - a) \sup_{x \in[a,b]} |\partial_x F(x)| \leq C |q'|
$$
by choosing
$$
F(x) =\left\{
\begin{array}{cc}
\displaystyle \alpha x, & 0 \leq x \leq \frac{b - a}{4};\\
&\\
\displaystyle1, & \frac{b-a}{4} \leq x \leq \frac{3(b - a)}{4};\\
&\\
\displaystyle -\alpha x, & \frac{3(b-a)}{4} \leq x \leq b
\end{array}
\right.
$$
so that
$$
\int_a^b |\partial_x F(x)| dx \leq C
$$
does not depend on $b - a$.

The second term is higher order in the sense that
$$
\langle F, W(x,t) \chi\rangle = {\cal O}\left(e^{- 1/\omega}\right) ||F\chi||.
$$
The third term is bounded similarly
$$
\left|\langle F(x), \frac{\partial_x S}{S} \partial_x \chi \rangle\right| \leq C \int_a^b |\partial_x \chi| \leq C|b -a|q'.
$$
The rest of the terms are higher order or ${\cal O}\left(e^{- 1/\omega}\chi\right)$

These estimates together with equation (\ref{chieq}) imply that
\begin{equation}\label{chiprime-f}
|\partial_t\langle F,\chi \rangle| \leq C |b - a| q' + C_1 q'.
\end{equation}
So,
\begin{equation}\label{chiprime-g}
|\langle F, \chi\rangle_t| \leq C |b - a| \int_0^t q'(s) ds + C_1 \int_0^t q'(s) ds
\end{equation}
for $t \leq 2 t_0$, and recall that
$$
v = q'(t) \sim \varepsilon^2 \omega^2 t, \ \mbox{for}\ \ t\leq t_0.
$$
More generally for $t \leq t_0$ we have
$$
\int_0^t q'(s) ds \sim \varepsilon^2 \omega^2 t^2/2, \ \mbox{for}\ t < t_0.
$$

On an interval of size one we use the condition $\partial_x \chi = {\cal O}(q') \ll 1$ in order to estimate
$$
\int_a^{a + 1}|\chi| dx = \int_a^{a + 1}e^{-i\eta(x)}\chi dx =
$$$$
\left. e^{-i\eta(x)} \int_a^x \chi(y) dy\right|_a^{a + 1} + \int_a^{a+1} i \eta'(y) \int_a^y \chi(y') dy'.
$$
Then
\begin{equation}\label{chiprime-h}
\left|\int_a^{a + 1}|\chi| dx\right| \leq \left| \int_a^{a +1} \chi(y) dy\right| + \int_a^{a + 1} |\eta'(y)| dy \sup_{a \leq y \leq a+1} \left|\int_a^y \chi(y') dy'\right|.
\end{equation}

Now, if at some point $|\chi(a)| \leq C q'$, then it follows from (\ref{chiprime-e}) and (\ref{chiprime-j}) below that
\begin{equation}\label{chiprime-i}
\sup_{[a,b]} |\chi| \leq C |b - a| q' + C_1 q'.
\end{equation}
Therefore we only need to consider an interval $[a,b]$ with
$$
|\chi(y)| > C q' \Rightarrow |\eta'(y)| \leq \frac{2q'}{|\chi|} \leq C < \infty.
$$
We have used that equation (\ref{chiprime-e}) implies
$$
\frac{|\chi(b)|^2 - |\chi(a)|^2}{\displaystyle\sup_{x \in [a,b]}|\chi|} \leq C |b - a| q'.
$$
We make $|\chi(b)| = \sup_{x \in [a,b]} |\chi|$ by varying $b$, and as a result we get
\begin{equation}\label{chiprime-j}
\sup_{x \in [a,b]} |\chi| - \frac{|\chi(a)|^2}{\displaystyle \sup_{x \in [a,b]} |\chi|} \leq C |b - a| q'.
\end{equation}

Then by \ref{chiprime-f}
\begin{equation}\label{chiprime-k}
\int_a^{a+1}|\chi| dx \leq \sup_{y \in [a,a+1]} \left|\int_a^y \chi(y') dy'\right| \leq C |b - a| \int_0^t q'(s) ds + C_1 \int_0^t q'(s) ds.
\end{equation}
So, in particular,
\begin{equation}\label{chiprime-m}
|\chi(y)| \leq \max \left\{C q' + C _1 \int_0^t q'(s) ds\right\} \leq C \varepsilon^2\omega^2t^2, \ \mbox{for}\ t\leq t_0 = \frac{1}{\omega\varepsilon}
\end{equation}
which means that $|\chi(x)| \leq \delta $ for all $t \leq \delta/(\omega\varepsilon)$.
\end{proof}

\section{Momentum decay}

In this section we prove that if the radiation is outgoing from the boundary of the well, and if the soliton energy is monotonic decreasing
($\dot E\le 0$), then the soliton slows down, monotonically. To this end, we modify the Ansatz so as to include the slow time dependent change in the oscillation, explicitly in the soliton term. That is we replace $S(x-X)$ by $S(x-a(t)),$
with $a(t)\equiv \int_0^t v(s) ds+ D(t).$
\begin{theorem}
Assume that for the NLS as before, we have that $\partial_x q >0$, that $\chi$ is small, and that $\dot E<0.$
Then the velocity of the center of the soliton, $v(t)$, is monotonic decreasing in time, according to the {\emph Dissipation equation} below.
\end{theorem}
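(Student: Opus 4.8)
The plan is to track the soliton's mechanical momentum and show that the outgoing radiation enters it as a drag whose sign is fixed by the hypotheses. First I would work with the modified Ansatz $\psi = S(x-a(t))(1+\chi)e^{\imath\theta}$, $a(t)=\int_0^t v(s)\,ds + D(t)$, and define the momentum $P(t)=\mbox{Im}\int \bar\psi\,\partial_x\psi\,dx = \tfrac12\int \rho\, v_\phi\,dx$ with $\rho=|\psi|^2$ and $v_\phi = 2\partial_x\phi$. Expanding with $\rho=S^2|1+\chi|^2$ and $v_\phi = v + w$, and using the orthogonality condition $\mbox{Re}\langle S^2,\chi\rangle=0$ to kill the linear term, gives to leading order $P(t)=\tfrac12 M(t)\,v(t)$, where $M(t)=\int S^2\,dx$ drifts slowly through $E(t)$ and the corrections are $\mathcal O(S^2\mbox{Re}\chi\,v)$ plus quadratic-in-$\chi$ terms of exactly the kind already controlled in the proof of the Soliton Energy Decay theorem. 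The equation governing $v$ (the boost) is the one selected by the translation orthogonality condition $\mbox{Im}\langle S_x',S\chi\rangle=0$, so it is obtained by projecting the modulation equation (\ref{modulation}) onto the translation generalized eigenfunction $\eta_1=(\partial_x S,\partial_x S)$, paralleling the derivation of $\dot E,\dot\gamma$ by projection onto $\sigma_3\xi_1$.

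I would then derive the Dissipation equation by two complementary routes and match them. The first is the projection just described: $\langle\eta_1,\cdot\rangle$ of (\ref{modulation}) yields a scalar equation whose left side is $\dot v$ times a normalization, and whose right side is the pairing of $\eta_1$ with the inhomogeneous terms $\mathcal O(b{\bf K},b)$, $\mathcal{NL}({\bf K})$, and $\mathcal O(Sv_\theta)$. The second is the NLS momentum balance $\partial_t\,\mbox{Im}(\bar\psi\psi_x)+\partial_x\Pi = -|\psi|^2 V'(x)$, with $\Pi$ the momentum flux assembled from the kinetic, nonlinear and quantum-potential contributions; integrating over the well and its transition region, the harmonic part of $-\int\rho V'$ produces the oscillation while $\Pi$ evaluated at the transition region is the genuine momentum loss. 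Using $v_\phi=v+w$, $2\partial_x\theta = v$, and $\partial_x q>0$ established from the Burgers/Euler analysis, the boundary flux is tied to the same combination $S^2\,\mbox{Re}\chi\,\partial_x q$ that controls $\dot E$ in (\ref{dissipation}).

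Next I would assemble the sign. Differentiating $v=2P/M$ gives $\dot v = 2\dot P/M - 2P\dot M/M^2$, and I substitute $\dot M = 2\dot E\int S\,\partial_E S\,dx$, which follows from $\partial_t\int S^2 = 2\dot E\int S\,\partial_E S$ and is precisely the integral appearing in (\ref{dissipation}). The two dissipative inputs, the outgoing flux $\propto S^2\mbox{Re}\chi\,\partial_x q$ and the energy loss $\dot E<0$, enter with a common sign; Theorem \ref{eta values} pins $\cos\eta$ at its trapped value so that $\mbox{Re}\chi$ does not change sign, and combined with $\partial_x q>0$ this fixes the drag to be decelerating. On averaging over one oscillation period the conservative restoring term drops out, leaving the secular part of $\dot v$, equivalently the velocity envelope or turning-point speed, strictly negative to leading order in $\chi$.

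The hard part will be cleanly separating the oscillatory conservative dynamics from the secular dissipative drift: the instantaneous $v(t)$ oscillates, so the monotonicity is really a statement about its envelope, and one must show that the harmonic restoring force contributes only to the oscillation and cancels over a period, leaving a definite-sign remainder. Controlling this requires bounding the cross terms, namely the $\dot X$-weighted boundary integrals $\int_x^\infty S^2\,\partial_x\mbox{Re}\chi\,dx$, the quadratic-in-$\chi$ corrections, and the correction to the quantum potential, by the Multiplicative Error Bound ($|\chi|,|\chi|'=\mathcal O(q')$ and $|\chi|\le\delta$), so that they are of higher order than the principal drag $S^2\mbox{Re}\chi\,\partial_x q$ and cannot reverse the sign for $t\le t_0 = 1/(\omega\varepsilon)$.
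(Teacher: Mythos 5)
Your plan diverges from the paper's proof in a way that exposes two genuine gaps. The paper never works with the global momentum $P(t)=\mbox{Im}\int\bar\psi\,\partial_x\psi\,dx$; it works with the localized observable $\langle\psi,M_K\psi\rangle$, $M_K=F_KpF_K$, where the cutoff $F_K$ confines every computation to the region in which the bootstrap bounds on $\chi$ actually hold, and it derives the Dissipation Equation (\ref{Diss}) by commutator and virial-type evaluations of the $S$--$S$, $R$--$R$ and cross terms. This localization is not cosmetic, and dropping it breaks your first step: the expansion $P(t)=\tfrac12 M(t)v(t)+\mathcal O(S^2\mbox{Re}\chi\,v)+\mathcal O(\chi^2)$ requires $\chi$ to be small on all of space, whereas the Multiplicative Error Bound gives $|\chi|\le\delta$ only in the transition region and for $t\le\delta/(\omega\varepsilon)$. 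Outside the well the solution is essentially pure radiation while $S$ decays exponentially, so $\chi=R/S$ is exponentially large there; consequently the ``quadratic-in-$\chi$ corrections'' in your $P$ and $\dot P$ --- in particular the radiation momentum $\mbox{Im}\int\bar R\,\partial_x R\,dx$, which accumulates secularly as matter tunnels out --- are not controlled by any estimate in the paper. Your second route (momentum balance integrated over the well with boundary flux) is much closer in spirit to what the paper does, but you cannot ``match'' it against the uncontrolled global identity.

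The second gap is the sign bookkeeping. From $v=2P/M$ you get $\dot v=2\dot P/M-v\,\dot M/M$, and mass loss ($\dot M<0$, which accompanies $\dot E<0$) enters with the sign that \emph{increases} $|v|$: a soliton shedding mass at fixed momentum speeds up. Whether the soliton decelerates depends on the momentum per unit mass carried by the outgoing radiation relative to the soliton, plus the reaction force of the anharmonic barrier; so your assertion that the outgoing flux and the energy loss ``enter with a common sign'' is precisely the point at issue, not a consequence of $\partial_x q>0$ and $\dot E<0$. The paper resolves this by isolating terms whose signs are pinned: the radiation drain $-(v/2)\,\partial_t\langle R,F_K^2R\rangle$, nonpositive for $v>0$ because outgoing radiation forces $\partial_t\langle R,F_K^2R\rangle\ge 0$ on the support of $F_K$, and the barrier-anharmonicity term $\langle\psi,s_n(x)V_{(3)}^2\psi\rangle-\langle R,s_n(x)V_{(3)}^2R\rangle$, whose sign comes from the stated positivity of the third derivative of $V$ in the transition region. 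Period averaging cannot substitute for this comparison, because the competing terms ($-v\dot M/M$ versus the radiated momentum) are both secular rather than oscillatory; without computing which dominates, the conclusion $\dot v\le 0$ is asserted rather than proved.
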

\begin{proof}
Let
$$
i \frac{\partial \psi}{\partial t} = (- \partial_x^2 + V) \psi + \lambda
|\psi|^2\psi
$$
$$
-ES_E = (- \partial_x^2 + V) S_E + \lambda S_E^3
$$
where $\lambda<0$. The function $S$ of the form
$$
S_E(x,t) = e^{i\theta_h(x,t)} S_E(x - a(t))
$$
solves NLS equation if ($V=V_h)$
$$
\theta_h = \int_0^t E(s) ds + \gamma(t) + v(t)x/2.
$$

{\em Momentum identity}

 Let $F_K = F_K(|x| \geq K)$ and such that $|F_K^{(n)}|\lesssim K^{-n}$.
  Then
$$
\partial_t \langle \psi, F_K p F_K \psi \rangle = - \langle \psi,
F^2_K \frac{\partial V}{\partial x} \psi \rangle +  \langle
\psi,\left [ F_K \lambda (- \partial_x |\psi|^2) F_K+\imath[- \partial_x^2 ,F_KpF_K]\right ] \psi \rangle
$$$$
= - \langle \psi, F_K^2 \frac{\partial V}{\partial x} \psi \rangle +\langle
\psi,\imath[- \partial_x^2 ,F_KpF_K] \psi \rangle
+\frac{\lambda}{2} \int \partial_x F_K F_K |\psi|^4 dx
$$
where
$$
p = -i\partial_x.
$$

 Ansatz:
$$
\psi = e^{i\theta_h(x,t)} \left[S_{E(t)}(x - a(t)) + R(x,t)\right]
$$
with orthogonality as before.

Then

\begin{equation}\label{R}
i\frac{\partial R}{\partial t} = (- \partial_x^2 + V_t - E(t)) R + 2\lambda S^2 R +
\lambda S^2 \overline{R}
$$$$
 + \lambda |R|^2 R + {\cal O}(SR^2) + {\cal O}(\dot
E, \dot\gamma)R +(V-V_t+x\dot v/2)(R+S).
\end{equation}

$$ V_t=V(x-a(t)).$$
Now we use the Ansatz in the momentum identity
\begin{equation}\label{S1}
\partial_t \langle e^{i\theta}S, F_K p F_K e^{i\theta}S \rangle =\partial_t \langle S, F_K (p+v/2) F_K S \rangle=
\partial_t \langle S, F_K (v/2)F_K S \rangle+0
$$$$
= - \partial_t \langle e^{i\theta} R, F_K p F_K e^{i\theta} R
\rangle - \langle S, F^2_K \frac{\partial V}{\partial x} S \rangle -
\langle R, F^2_K \frac{\partial V}{\partial x} R \rangle +
\frac{\lambda}{2} \int F_K' |\psi|^4 dx
$$$$
- \left[\partial_t \langle e^{i \theta} S, F_K p F_K e^{i \theta}R\rangle +
c.c.\right]-2 \mbox{Re} \langle S,F_K^2\frac{\partial V}{\partial x} R \rangle+\langle\psi, \imath [-\partial_x^2 ,M_K]\psi\rangle.
\end{equation}

\begin{equation}\label{Com}
\langle\psi, \imath [- \partial_x^2,M_K]\psi\rangle=\langle e^{\imath\theta} S,\imath [- \partial_x^2,M_K] e^{\imath\theta} S \rangle
$$$$
+ \langle e^{\imath\theta} R,\imath [- \partial_x^2,M_K] e^{\imath\theta} R \rangle+ 2 \mbox{Re} \langle e^{\imath\theta} S,\imath [-\partial_x^2,M_K] e^{\imath\theta} R \rangle.
\end{equation}

\textbf{Estimates}:

Define $$
M_K\equiv F_K pF_K.
$$
\begin{equation}\label{S2}
\partial_t \langle S, F_K^2 (v/2) S\rangle- \langle e^{\imath \theta} S,\imath[-\partial_x^2 +V+\lambda S^2,M_K]e^{\imath\theta} S\rangle
$$$$
=(\dot v/2) \langle S,F_K^2 S\rangle +(v/2)\partial_t\langle S,F_K^2 S\rangle-\langle S,\imath [(p+v/2)^2+V+\lambda S^2,F_K(p+v/2)F_K] S \rangle
$$$$
=(\dot v/2) \langle S,F_K^2 S\rangle + T1-\langle S,\imath [V-V_t+\omega^2 a x, M_K] S\rangle+ \langle S, F_K^2 S\rangle \omega^2 a(t)
$$$$
-\langle S,\imath[p^2+V_t+\lambda S^2,F_K(p+v/2)F_K]S\rangle-\langle S,\imath [pv,F_K(p+v/2)F_K] S\rangle
$$$$
=(\dot v/2) \langle S,F_K^2 S\rangle + T1- v\langle S,[F'_KpF_K+F_KpF'_K] S\rangle+ \langle S, F_K^2 S\rangle \omega^2 a(t)
$$$$
+{\cal O}(S^4) + \langle S,s_n(x) V_{(3)}^2 S\rangle - v^2\langle S, \imath [p,F_K^2] S\rangle =\partial_t [ v/2 \langle S,F_K^2 S\rangle] +
$$$$
{\cal O}(S^4)+{\cal O}(S^2) {\cal O}(v^2+a)+\omega^2 a(t) \langle S,F_K^2 S\rangle+ \langle S,s_n(x) V_{3}^2 S\rangle.
\end{equation}
Here $s_n(x) $
is the sign of $x$.
We used that
$$
F_K \imath [p, V-V_t+\omega^2 a]F_K\equiv s_n(x) V_{(3)}^2F_K^2 .
$$
Here, we applied the Virial Theorem for the commutator expectation on $S$. The $v^2$ term drops since it is the integral of a product of the symmetric function $S^2$, (up to corrections of order $a$) with the antisymmetric function $(F_K^2)'$, the last dropped since the real part of the expectation of $p$ , on real functions , vanishes.
\begin{equation}\label{R1}
- \partial_t \langle e^{i\theta} R, F_K p F_K e^{i\theta} R \rangle+\langle e^{\imath\theta} R,\imath [- \partial_x^2 +V,M_K] e^{\imath\theta} R\rangle
$$$$
= - \partial_t \langle R, (v(t)/2) F_K^2 R \rangle - \partial_t \langle R,
F_K p F_K R \rangle+\langle e^{\imath\theta} R,\imath [- \partial_x^2 +V,M_K] e^{\imath\theta} R\rangle
$$$$
=-(\dot v/2)\langle R,F_K^2 R\rangle -(v/2)\partial_t \langle R,F_K^2 R\rangle-\partial_t\langle R,F_K p F_K R\rangle
$$$$
+\langle R,\imath[(p+v/2)^2+V,F_K(p+v/2)F_K]
R\rangle
$$$$
=-(\dot v/2)\langle R,F_K^2 R\rangle -(v/2)\partial_t \langle R,F_K^2 R\rangle -\langle R,\imath[-\partial_x^2 + V_t-E+2\lambda S^2,M_K] R \rangle
$$$$
-\langle R, \imath [V-V_t+xv/2,M_K] R\rangle-\langle (V-V_t+xv/2)S,M_K R\rangle-\langle R,M_K (V-V_t+xv/2)S\rangle
$$$$
+\langle R \left[ \imath[p^2+V,M_K+(v/2)F_K^2]+\imath [pv,F_K(p+v/2)F_K]+{\cal O}(SR+S^2C,\dot E,\dot \gamma)M_K \right] R\rangle
$$$$
= -(v/2)\partial_t \langle R,F_K^2 R\rangle+v\langle R,\left[ \imath[p^2,(F_K^2/2)]
+(\partial_x F_K(p+v/2)F_K+F_K(p+v/2)\partial_x F_K)\right] R \rangle
$$$$
+\langle R, {\cal O}(S^2,SR,O(\dot E,\dot \gamma)) M_K R\rangle-\langle (V-V_t+xv/2)S,M_K R\rangle-\langle R,M_K (V-V_t+xv/2)S\rangle
$$$$
=-(v/2)\partial_t \langle R,F_K^2 R\rangle+v\langle R,\left[ \imath[p^2,(F_K^2/2)]+(\partial_x F_K p F_K + F_K p \partial_xF_K)\right] R \rangle
$$$$
+v^2\langle R,\partial_x F_KF_K R\rangle +h.o.t.-\langle (V-V_t+xv/2)S,M_K R\rangle-\langle R,M_K (V-V_t+xv/2)S\rangle
$$$$
=-(v/2)\partial_t \langle R,F_K^2 R\rangle+(v^2/2)\langle R,\partial_x F_KF_K R\rangle +(v^2/4)\langle R,F_K\partial_x F_K R\rangle+h.o.t
$$$$
+\imath \langle R,[p^2,F_K^2] R\rangle v -2 \mbox{Re} \langle \tilde V S,pR\rangle.
\end{equation}
Here $ \tilde V\equiv (V-V_t+xv/2)$.
We also chose $F_K \tilde V=\tilde V, \partial_x F_K \tilde V=0.$
We have used in the above the following:
$$
\partial_x F_KpF_K+F_Kp \partial_xF_K = \imath[p,F_KpF_K]=\imath [p^2,F_K^2]/2
$$

%


%

\textbf{T1}

\begin{equation}\label{S3}
(v(t)/2) \partial_t \langle S, F_K^2 S\rangle = v(t)\dot E \int
\frac{\partial S}{\partial E} F_K^2 S dx - v(t)\dot a \int \frac{\partial
S}{\partial x} S F_K^2 dx
$$$$
= v(t)\dot E \int \frac{\partial S}{\partial E} F_K^2 S dx +
 v(t)\dot a \int F_K S^2 \partial_x F_K dx
$$$$
= v(t)\dot E \int \frac{\partial S}{\partial E} F_K^2 S dx +
( v(t)/2)\dot a \int F_K [- S(x)^2 + S(x-a)^2] \partial_x F_K dx
$$$$
 +(v(t)/2)\dot a \int F_K S(x)^2 \partial_x F_K dx (=0)
$$$$
= v(t)\dot E \int \frac{\partial S}{\partial E} F_K^2 S dx - v(t)\dot a
a(1 + {\emph o}(a)) \int F_K S \frac{\partial S(x)}{\partial x} \partial_x F_K dx
\end{equation}

Finally, we derive the statement of the theorem and the  \textsl{Dissipation Equation}:
\medskip

\textbf{Dissipation Equation}

\begin{equation}\label{Diss}
\frac{d}{dt}\left[ (v/2)\langle S,F_K^2 S\rangle \right]
$$$$
=-(v/2)\partial_t\langle R,F_K^2 R\rangle -\omega^2 a(t) \langle S,F_K^2 S\rangle +v^2\left[\langle R,F_K\partial_x F_K R\rangle \right]
$$$$
+ \langle \psi,s_n(x)V_{(3)}^2 \psi\rangle-\langle R,s_n(x)V_{(3)}^2 R\rangle+ h.o. t.
$$$$
\end{equation}

We used that due to antisymmetry of $\partial_x F_K,$ $\langle S, F_K \partial_xF_K S\rangle={\cal O}(a).$

\textbf{Cross Terms}

\begin{equation}\label{Cross1}
2 \mbox{Re} \langle e^{\imath\theta}S, \imath [- \partial_x^2 ,M_K]e^{\imath\theta}R\rangle-\partial_t\left[ \langle e^{\imath\theta}S,  M_K e^{\imath\theta}R\rangle
+c.c.\right]
$$$$
=2 \mbox{Re} \langle S, \imath [p^2+pv,M_K] R\rangle-\partial_t\left[\langle S, F_K(p+v/2)F_K R\rangle +c.c.\right]
$$$$
=2 \mbox{Re} \langle S, \imath [p^2+pv,M_K] R\rangle-\left[\langle \partial_ES \dot E- \partial_x S \dot a,F_K(p+v/2)F_K R\rangle +c.c.\right]
$$$$
+\left[\dot v\langle S,F_K^2 R\rangle +c.c.\right] +\langle S,F_K(\partial_x)F_K(-2\partial_t \mbox{Im} R)\rangle -(v/2)\langle S,F_K^2 2 \mbox{Re} \dot R\rangle
$$$$
=\left[\dot v\langle S,F_K^2 R\rangle +c.c.\right]+\dot a\langle \partial_x S,F_K^2 \mbox{Re} R\rangle (v/2)-\dot E v\langle \partial_E S,F_K^2 \mbox{Re} R\rangle
$$$$
+\mbox{Re} \left [ \dot a \langle \partial_x S, M_K  R\rangle-\dot E\langle \partial_E S,M_K  R\rangle\right].
\end{equation}

We used above the following observations:

\textbf{Term 2}

For $f,g$ real valued,
$$
 \langle S, F_KpF_K S\rangle = 0, \mbox{Re} \langle f,pg\rangle=0,
$$
since $S F_K$ is real function.

\textbf{Term 3}

$$
\langle S, F_K^2 \frac{\partial V}{\partial x} S\rangle = \int S^2
\frac{\partial V}{\partial x} F_K^2 dx = \int [S^2(x-a) - S^2(x)]
\frac{\partial V}{\partial x} F_K^2 dx
$$$$
= -2 a(1 + {\emph o}(a)) \int S(x) \frac{\partial S(x)}{\partial x}
\frac{\partial V}{\partial x} F_K^2 dx
$$
where we have used
$$
\int S^2(x) \frac{\partial V}{\partial x} F_K dx = 0
$$
if $V(x) = V(-x).$

\textbf{ Term 4}

\begin{equation}
\mbox{Re} \langle \tilde V S,pR\rangle= \langle \tilde V S,\partial_x \mbox{Im} R\rangle\sim\langle S^2 \tilde V'/2, \mbox{Im} \chi\rangle
$$$$
={\cal O}\left (\partial_x S(K)S(K)\omega^2(|\mbox{Im} \chi|+|v_q|)\right )a(t)={\cal O}\left (\partial_x S(K)S(K)\omega^2(\epsilon^2\omega^2t+\delta)\right ),
\end{equation}

for $ |t|\le \frac{\delta}{\epsilon\omega}.$

\medskip

\textbf{Term 5 (cross term)}

\begin{equation}
\langle S,F_K \partial_x F_K \partial_t S \mbox{Im} \chi\rangle = \langle F_K S,\partial_t F_K S \partial_x \mbox{Im} \chi \rangle+ \langle F_K S,\partial_t\partial_x(F_K S)\mbox{Im} \chi\rangle
$$$$
=\langle \partial_t(F_KS)^2/2,v_q\rangle+ \langle (F_KS)^2,\dot v_q\rangle +
$$$$
\langle (F_KS)(\partial_t\partial_x (F_KS) \mbox{Im} \chi) \rangle- \langle(F_KS)^2_x/2,\partial_t\partial_x (\mbox{Im} \chi)\rangle
$$$$
+ \langle(F_KS)^2_x/2,\mbox{Im} \chi\rangle = {\cal O}\left ( \epsilon^2\omega^2 S\partial_x S(K)[(\dot a+ \dot E) t+1]\right)a(t).
\end{equation}

\textbf{term 6 (cross term)}

\begin{equation}
v\langle S,F_K^2 \mbox{Re} \dot R \rangle= v\langle S,F_K^2 H \mbox{Im} \chi S\rangle - v \langle S,F_K^2\tilde V S\rangle
$$$$
v\langle S,[F_K^2,- \partial_x^2]S \mbox{Im} \chi\rangle - v\langle S,F_K^2\tilde VS\rangle
$$$$
= {\cal O}(\frac{1}{K})vS^2(K)v_q+ {\cal O}(v)aS^2(K)={\emph o}(1)v S^2(K),
\end{equation}
using that
$$
\int \tilde V=0
$$
on support of the function $F_K$, and $F_K^2S^2$ is symmetric up to corrections of order $a$.

\end{proof}

\section{Calculation of phases}
\label{phases}
We analyze here equations (\ref{phase5a}) and (\ref{phase5b}):
\begin{eqnarray}
1 + |\chi| \cos\eta = |1 + \chi| \cos q \\
|\chi| \sin\eta = |1 + \chi| \sin q
\end{eqnarray}

 We need the derivative
$$
\partial_x|1 + \chi| = \frac{1}{|1 + \chi|}\left[|\chi||\chi|' + \partial_x \mbox{Re}\chi \right]
$$
$$
= \frac{1}{|1 + \chi|}\left[|\chi||\chi|'+|\chi|'\cos\eta-|\chi|\sin\eta'\right],
$$
and
$$
1 - \frac{1}{|1 + \chi|} = \frac{|1 + \chi| - 1}{|1 + \chi|} = \frac{|1 + \chi|^2 - 1}{|1 + \chi| [1 + |1 + \chi|]}
= \frac{2\mbox{Re} \chi + |\chi|^2}{|1 + \chi| [1 + |1 + \chi|]}
$$$$
=  [\mbox{Re} \chi + |\chi|^2/2] [1-3 \mbox{Re} \chi + O{|\chi|^2}] = |\chi| \cos \eta+ |\chi|^2/2-3|\chi|^2\cos^2\eta  + O(|\chi|^3)
$$

Now differentiate equations (\ref{phase5a}) and (\ref{phase5b}) with respect to $x$ and get
\begin{eqnarray}
|\chi|' \cos\eta - \eta' |\chi| \sin\eta = |1 + \chi|' \cos q - |1 + \chi| \sin q q' \label{phase9a}\\
|\chi|' \sin\eta + \eta' |\chi| \cos\eta = |1 + \chi|' \sin q + |1 + \chi| \cos q q' \label{phase9b}
\end{eqnarray}

Then
\begin{equation}
|\chi|' \cos\eta \left[1 - \frac{\cos q+|\chi|/\cos \eta}{|1 + \chi|}\right] - \eta' |\chi| \sin\eta \left[1 - \frac{\cos q}{|1 + \chi|}\right]
$$$$
= - |1 + \chi|q' \sin q  
|\chi|' \left[\sin \eta - (\cos\eta+|\chi|) \frac{\cos q}{|1 + \chi|}\right] +
$$$$
\eta' |\chi| \left[\cos\eta + \sin\eta \frac{\sin q}{|1 + \chi|} \right] = |1 + \chi|q' \cos q  \label{phase10b}
\end{equation}

Now we calculate the determinant of this equation
\begin{equation}
Det \left|
\begin{array}{cc}
\cos\eta \left[1 - \frac{\cos q+|\chi|/\cos\eta}{|1 + \chi|}\right]; & - |\chi| \sin\eta \left[1 - \frac{\cos q}{|1 + \chi|}\right] \\
\left[\sin \eta - (\cos\eta+|\chi|) \frac{\cos q}{|1 + \chi|}\right]; & |\chi| \left[\cos\eta + \sin\eta \frac{\cos q}{|1 + \chi|}\right]
\end{array}
\right| =
$$$$
-(\sin\eta+\cos\eta)\frac{|\chi|^2\cos q}{|1+\chi|}
+|\chi| \left[1 - \frac{\cos q}{|1 + \chi|}\right] \cdot
$$$$
 \left[\cos\eta \left[\cos\eta +
\sin\eta \frac{\cos q}{|1 + \chi|}\right] + \sin\eta \left[\sin \eta - \cos\eta \frac{\cos q}{|1 + \chi|}\right] \right]
$$$$
 =
-(\sin\eta+\cos\eta)\frac{|\chi|^2\cos q}{|1+\chi|}+
|\chi| \left[1 - \frac{\cos q}{|1 + \chi|}\right]
$$$$
= -|\chi|^2 \sin\eta \left[1 + O\left(\frac{q^2}{|\chi|}\right)\right]-|\chi|^3\cos^2\eta.
\end{equation}

In order to find $|\chi|'$ we need the determinant
\begin{equation}
Det \left|
\begin{array}{cc}
- |1 + \chi| \sin q q'; & - |\chi| \sin\eta \left[1 - \frac{\cos q}{|1 + \chi|}\right] \\
|1 + \chi| \cos q q' ; & |\chi| \left[\cos\eta + \sin\eta \frac{\cos q}{|1 + \chi|}\right]
\end{array}
\right| =
$$$$
q' |\chi| |1 + \chi| \left[- \sin q \left[\cos\eta + \sin\eta \frac{\cos q}{|1 + \chi|}\right] + \sin\eta \left[1 - \frac{\cos q}{|1 + \chi|}\right] \cos q \right] =
$$$$
q' |\chi| |1 + \chi| \left[-\sin q (\cos\eta + \sin\eta) + |\chi|\sin\eta \cos\eta \right]
$$$$
+\mathcal O(q'|\chi|^3\sin\eta(\sin\eta+\cos\eta)).
\end{equation}
Then
$$
|\chi|' = \frac{q'} {|\chi|} |1 + \chi| \left[\sin q (1 + \cot\eta) - |\chi|\cos\eta \right]
$$

In order to find $\eta'$ we need the determinant
\begin{equation}
Det \left|
\begin{array}{cc}
\cos\eta \left[1 - \frac{\cos q+|\chi|/\cos\eta}{|1 + \chi|}\right]; & - |1 + \chi| \sin q q' \\
\left[\sin \eta - (\cos\eta+|\chi|) \frac{\cos q}{|1 + \chi|}\right]; & |1 + \chi| \cos q q'
\end{array}
\right| =
$$$$
-q'|\chi|\cos q(\sin q+\cos q)+
$$$$
q' |1 + \chi| \left[\cos q \cos\eta \left[1 - \frac{\cos q}{|1 + \chi|} \right] + \sin q \left[\sin \eta - \cos\eta \frac{\cos q}{|1 + \chi|}\right] \right] \approx
$$$$
-q'|\chi|+q' |1 + \chi| \left[|\chi| \cos^2\eta + q \left[\sin \eta - \cos\eta\right] \right]\approx
$$$$
-q'|1+\chi||\chi|\sin\eta[\sin\eta+\cos\eta](1+\mathcal O(|\chi|)).
\end{equation}

Then
\begin{equation}\label{eta'}
\eta' = -\frac{q'}{|\chi|}  \left[ \cos\eta + \sin\eta+\mathcal O(|\chi|) \right]
\end{equation}


\begin{thebibliography}{99}

\bibitem{GS05} Gang Zhou, Sigal I. M., Reviews in Mathematical Physics, {\bf 17}, 1143, (2005).

\bibitem{GS06} Gang Zhou, Sigal I. M., Geometric and Functional Analysis, {\bf 16}, 1377 (2006).

\bibitem{GS07} Gang Zhou, Sigal I. M., Advances in Mathematics, {\bf 216}, 443 (2007).

\bibitem{GW} Gang Zhou, M. I. Weinstein, Appl. Math. Res. Express {\bf 2011}, 123-181 (2011).

\bibitem{SWGA} A. Soffer, M.I. Weinstein, Geometric and Functional Analysis GAFA, {\bf 8}, 1086 (1998).

\bibitem{SW99} A. Soffer, M.I. Weinstein, Invent. Math. {\bf 136}, 9 (1999).

\bibitem{FGJS} J. Fr\"ohlich, S. Gustafson, B. L. G. Jonsson, I. M. Sigal, Comm. Math. Phys. {\bf 250}, 613 (2004).

\bibitem{FS05} V. Fleurov and A. Soffer, Europhys. Letts. {\bf 72}, 287 (2005).

\bibitem{DFSS07}    G. Dekel, V. Fleurov, A. Soffer, C. Stucchio, Physical Review A {\bf 75}, 043617 (2007).

\bibitem{DFSF09} G. Dekel, O. V. Farberovich, A. Soffer, V. Fleurov, Physica D: Nonlinear Phenomena {\bf 238}, 1475 (2009).

\bibitem{DFFS10} G. Dekel, V. Farberovich, V. Fleurov, A. Soffer, Phys. Rev. A {\bf 81}, 063638 (2010).

\bibitem{r05} O. S. Rozanova, Proc. Amer. Math. Soc. {\bf 133}, 2347 (2005).

\bibitem{K09} Usama Al Khawaja, Physics Letters A {\bf 373}, 2710 (2009).

\bibitem{BS03} V. Buslaev, C. Sulem, Annales de l'Institut Henri Poincar\'e, {\bf 20}, 419 (2003).

\bibitem{SW04} A. Soffer, M.I. Weinstein, Rev. Math. Phys., {\bf 16}, 977 (2004).

\bibitem{m27} E. Madelung,  Z. Phys. {\bf 40}, 322 (1927)

\bibitem{m75} J. H. Marburger, Progr. Quant. Electr., {\bf 4}, 35
(1975).

\bibitem{Sof} A. Soffer, Communications in Partial Differential Equations, {\bf 33}, 1953 (2008).

\bibitem {GJ} I. Gamba, A. J\"ungel, Archive for Rational Mechanics and Analysis {\bf 156}, 183 (2001).

\bibitem{Gols} F. Golse, P.L. Lions, B. Perthame, and R. Sentis. J. Funct. Anal., {\bf 88}, 110 (1988).

\bibitem {Per} B. Perthame, Bull. Amer. Math. Soc. {\bf 41}, 205 (2004).

\bibitem{BGP} D. Bambusi, S. Graffi, T. Paul,  Asymptotic Analysis, {\bf 21}, 149 (1999).

\bibitem {DLev} C. Doering, J. Gibon, C. D. Levermore, Physica D, {\bf 71}, 285 (1994).

\end{thebibliography}
\end{document}